\documentclass[journal]{new-aiaa}
\usepackage[utf8]{inputenc}
\usepackage{dsfont}
\usepackage[header,page,title]{appendix}
\usepackage[pagewise]{lineno}

\usepackage{graphicx}
\usepackage{float}

\usepackage{amsmath,bm,bbm,latexsym,subcaption,epstopdf}
\usepackage[version=4]{mhchem}
\usepackage[title]{appendix}
\usepackage{longtable,tabularx}
\usepackage[ruled,linesnumbered]{algorithm2e}
\usepackage[dvipsnames]{xcolor}
\usepackage{xcolor} 
\colorlet{blue}{black} 

\graphicspath{{./figures/}} 

\DeclareGraphicsExtensions{.pdf,.eps}

\newcommand{\dfn}{\triangleq}
\newcommand{\given}{\mid}

\DeclareMathOperator{\sign}{sgn}
\DeclareMathOperator{\sat}{sat}

\DeclareMathOperator{\expect}{\mathbb{E}}

\newcommand{\meas}{\ensuremath{\mathcal{Y}}}

\newcommand{\ZEM}{\ensuremath{Z_\text{DGL1}}}

\let\abs=\envert

\usepackage{amsthm}

\newtheorem{theorem}{Theorem}[section]

\newtheorem{lemma}[theorem]{Lemma}

\title{Unified Estimation–Guidance Framework \\ Based on Bayesian Decision Theory}
\author{Liraz Mudrik\footnote{This work is part of Dr. Mudrik’s Doctoral Research, Stephen B. Klein Faculty of Aerospace Engineering; currently Postdoctoral Fellow, Department of Mechanical and Aerospace Engineering, Naval Postgraduate School, Monterey, CA 93943; liraz109@gmail.com. Member AIAA (Corresponding Author).}}
\author{Yaakov Oshman\footnote{Professor Emeritus, Stephen B.\ Klein
    Faculty of Aerospace Engineering;
    \texttt{yaakov.oshman@technion.ac.il}. Fellow AIAA.
    \newline\newline Presented as Paper AIAA 2023-2497 at
    the AIAA SCITECH 2023 Forum, National Harbor, MD, 23--27 January
    2023.}}
\affil{Technion-Israel Institute of Technology, Haifa, 3200003,
  Israel}

\begin{document}
\maketitle

\renewcommand{\include}{\input}

\begin{abstract}
  Using Bayesian decision theory, we modify the perfect-information,
  differential game-based guidance law (DGL1) to address the
  inevitable estimation error occurring when driving this guidance law
  with a separately-designed state estimator. This yields a stochastic
  guidance law complying with the generalized separation theorem, as
  opposed to the common approach, that implicitly, but unjustifiably,
  assumes the validity of the regular separation theorem. The required
  posterior probability density function of the game's state is
  derived from the available noisy measurements using an interacting
  multiple model particle filter. When the resulting optimal decision
  turns out to be nonunique, this feature is harnessed to
  appropriately shape the trajectory of the pursuer so as to enhance
  its estimator's performance. In addition, certain properties of the
  particle-based computation of the Bayesian cost are exploited to
  render the algorithm amenable to real-time implementation. The
  performance of the entire estimation-decision-guidance scheme is
  demonstrated using an extensive Monte Carlo simulation study.
\end{abstract}

\section{Introduction}
%





Using differential game theory to analyze the problem of guiding an
interceptor towards an evasively maneuvering target has been
intensively explored in recent
decades. Gutman~\cite{gutman_optimal_1979} developed a class of
differential game-based guidance laws (DGLs) assuming that the
acceleration commands of both the interceptor and the target are
bounded, the target has known linear time-invariant dynamics, and both
players possess perfect information, i.e., full knowledge about the
state of the game.  
\textcolor{blue}{Specifically, a DGL law constitutes the saddle-point solution of a linearized zero-sum differential game of degree with the terminal miss distance as the cost function.}
These DGLs rely solely on computing the time-to-go
(the time remaining until the end of the engagement) and the
instantaneous zero-effort miss (ZEM), which is the miss distance
resulting if both players do not apply any further acceleration
commands until the end of the game.  Assuming that the dynamics of
both players is strictly proper first-order, Shinar then developed the
so-called DGL1 law~\cite{shinar_solution_1981}. 
Assuming the perfect-information pattern, a DGL1-guided interceptor
that has superior maneuverability and agility capabilities \textcolor{blue}{(and, in special cases, even just superior agility)} is guaranteed to have hit-to-kill performance (i.e., deterministic
capture of the target). In this case, a singular region is formed in
the game space, whose boundaries depend on the time-to-go and the
(assumed known) maximal acceleration capabilities and time constants
of the dynamics of both players.  The DGL1 law exhibits a well-defined bang-bang behavior in the regular region (outside the singular
region).  Inside the singular region, the acceleration commands are arbitrary, and are commonly chosen to be saturated linear so as to
prevent control chattering.

When the governing information pattern is not perfect, as is the case
in all realistic, stochastic scenarios, an estimator has to be relied
upon in order to estimate the state of the game. The inevitable
estimation error creates a crucial decision problem for an interceptor
implementing the DGL1 law (or similar perfect-information laws), as it
now has to decide on the state of the game, and, consequently, on its
corresponding control action, based on imprecise
information. Obviously, bad decisions have consequences, and we will
use the following two examples to explain what this statement means in
the context of a stochastic scenario where a perfect-information
guidance law is used.  Suppose that, due to the inherent estimation
error, the pursuer cannot determine which of the following two
hypotheses, concerning the state of the game, is true at a certain
point in time: 1) the state of the game is inside the singular region,
rendering any control action optimal, or 2) the state of the game is
in one of the regular regions, where the optimal control action is
unique.  
In this case, if the pursuer decides that the first
hypothesis is true, when, in fact, it is not---it might apply a control action that will lead to \textcolor{blue}{an increased miss distance}, thus paying a dear price for its bad decision. 
On the other hand, deciding that the
second hypothesis is true will render the pursuer's control action
optimal---or, at least, non-consequential---even if the first
hypothesis is true. Thus, in this case, the pursuer's decision making
is easy.  Now consider another case, in which the inherent uncertainty
leads to a decision between more hypotheses, involving at least two
different regular regions of the game space.  In this case the
pursuer's decision making becomes hard, as in each regular region the
optimal control action is unique, rendering any other control action
consequential, and, potentially, disastrous in terms of miss
distance. Bad decisions, in this latter case, will have consequences.

Recognizing that the main problem of using a perfect-information law, such as DGL1, in a stochastic setting, is its impaired ability of making optimal decisions in the presence of uncertainty, the new concept we propose in this paper (a preliminary version of which has been presented in~\cite{mudrik_estimation-based_2019-1}) is to use Bayesian decision theory~\cite{trees_detection_2004} as a means of enhancing the capability of the guidance law to reach optimal decisions via statistically ranking the various guidance decisions possible at each point in time.  This process is based on the (miss distance-based) costs and the a posteriori probabilities associated with each such decision.  The costs are computed using the DGL1 methodology, whereas the a posteriori probabilities are computed based on the posterior \textcolor{blue}{probability density function (PDF)} of the game's state, which, in our work, is generated by the interacting multiple model particle filter (IMMPF)~\cite{blom_exact_2007}. The choice of the IMMPF is based on its ability to cope with nonlinear and non-Gaussian models, as well as with non-Markovian mode switching problems, which characterize realistic interception scenarios. Indeed, based on the findings of Shaferman and Oshman~\cite{shaferman_stochastic_2016}, according to which a well-timed evasion maneuver can enforce a considerable miss distance, a non-homogeneous Markov model is assumed in this work for the target evasion maneuver mode, to better model sophisticated targets.

When implemented naively, the new Bayesian decision-based guidance and estimation scheme can incur a severe computational burden.  This burden arises from the need to compute the Bayesian costs associated with the available decisions at each point in time, and from the fact that, to compute these costs, the entire posterior PDF (represented, in our case, via a large particle population) is needed.  We significantly reduce the computational effort of the Bayesian decision criterion by observing that throughout major parts of the interception engagement, identified in this work, there is really no need to explicitly compute the costs in order to reach an optimal decision.

In some other parts of the engagement, the Bayesian decision criterion
may yield ambiguous decisions (recall that the perfect-information
DGL1 law also exhibits command ambiguity when the state of the game is
inside the singular region). Whereas, in the deterministic scenario,
this ambiguity merely allows using a linear command to prevent command
chattering, in our case we exploit it to find, in a fashion
reminiscent of~\cite{shaviv_estimation-guided_2017}, the optimal
command that would result in the best information-enhancing
trajectory, as this trajectory would yield improved information about
the state of the target, resulting in superior guidance performance.
The exploitation of decision ambiguity was first considered by the
authors in a preliminary conference
paper~\cite{mudrik_information-enhancement_2023}.

A noteworthy feature of the new guidance concept presented in this
work is that it complies with the guidelines of the generalized
separation theorem (GST), which Witsenhausen asserted
in~\cite{witsenhausen_separation_1971} based on a theorem by Striebel
in~\cite{striebel_sufficient_1965}. Valid for realistic scenarios
involving nonlinearities and non-Gaussian distributions, the GST
dictates that the guidance law should take into account the posterior
probability distribution of the state of the game, which, in turn,
should be generated via a separately-designed estimator. Indeed, in
the new guidance and estimation scheme the control commands are
generated by making optimal Bayesian decisions that rely on the entire
posterior PDF at each point in time. 
This is contrary to the common approach, that straightforwardly lets a certainty equivalent (perfect-information-based) guidance law operate on the output of a separately designed estimator, implying an assumption on the validity of the classical separation theorem~\cite{stengel_stochastic_1986}, that has never been proven valid for realistic scenarios.  Indeed, works such as \textcolor{blue}{~\cite{shinar_what_2003, shinar_integrated_2007, shaferman_stochastic_2016}} have shown that that the ideal (perfect-information) performance
of the DGL1 law severely degrades under realistic (stochastic)
circumstances.  Guidance laws that utilize some posterior information
provided by a target state estimator have already been proposed, e.g.,
\cite{speyer_adaptive_1976,hexner_stochastic_2007,hexner_lqg_2008}. Later,
Shaviv and Oshman~\cite{shaviv_estimation-guided_2017} used the notion
of reachability sets to develop a GST-compliant methodology for the
stochastic interception problem, which is not limited by the standard
assumptions of linearity and Gaussian distributions.
\textcolor{blue}{The main contribution of this paper is the development of a unified, GST-compliant estimation and guidance framework that resolves the theoretical inconsistency inherent in assuming separation for this class of problems. To the best of the authors' knowledge, this work constitutes the first application of Bayesian decision theory to bridge the gap between differential game-based guidance and stochastic estimation. This innovative approach not only prevents the performance degradation typical of certainty-equivalent implementations but also provides a foundation for estimation-aware guidance in non-Gaussian scenarios.
Consolidating the authors' preliminary results introduced in~\cite{mudrik_estimation-based_2019-1} and~\cite{mudrik_information-enhancement_2023}, the present work establishes the complete theoretical framework and derives the sufficiency conditions that reduce the computational burden to a level amenable to real-time implementation.}
Large-scale Monte Carlo (MC) simulations demonstrate that trajectory shaping can be systematically embedded within the full guidance law to yield consistent performance improvements. 

The remainder of this paper is organized as follows.  The following
section presents a mathematical formulation of the engagement
problem. The classical DGL1 guidance law is reviewed in
Sec.~\ref{sec:PI-DGL1}.  A brief description of the IMMPF algorithm is
presented in Sec.~\ref{sec:Estimation-process}, which also discusses
the use of non-homogeneous Markov chains to model the mode switching
dynamics.  Our main results are presented in
Sec.~\ref{sec:Est_awa_GL}, which applies Bayesian decision theory to
produce an estimation-aware version of the DGL1 law. This section also
shows how the computational effort associated with the new guidance
law can be significantly reduced, and how the decision ambiguity, when
it occurs, can be used to enhance information via trajectory shaping.
An extensive comparative MC simulation study is then presented in
Sec.~\ref{sec:sim}, examining the performance of the estimation-aware
and the classical DGL1 guidance laws in stochastic interception
scenarios.  Concluding remarks are offered in the final section.

\section{Problem Definition}
\label{Prob_Form}
A single-pursuer, single-evader interception scenario is
considered. Figure~\ref{fig:Planar-engagement-geometry} shows a
schematic view of the geometry of the assumed planar endgame scenario,
where $X_{I}$-$O_{I}$-$Y_{I}$ is a Cartesian inertial reference
frame. The interceptor and the target, as well as variables associated
with them, are denoted by $M$ and $T$, respectively.  The speed,
normal acceleration and the path angle are denoted by $V$, $a$ and
$\gamma$, respectively.  The slant range between the interceptor and
the target is $\rho$, and the \textcolor{blue}{line of sight} (LOS) angle, measured
with respect to the $X_{I}$ axis, is $\lambda$.
\begin{figure}[tbh]
\begin{centering}
\includegraphics[width=3.25in]{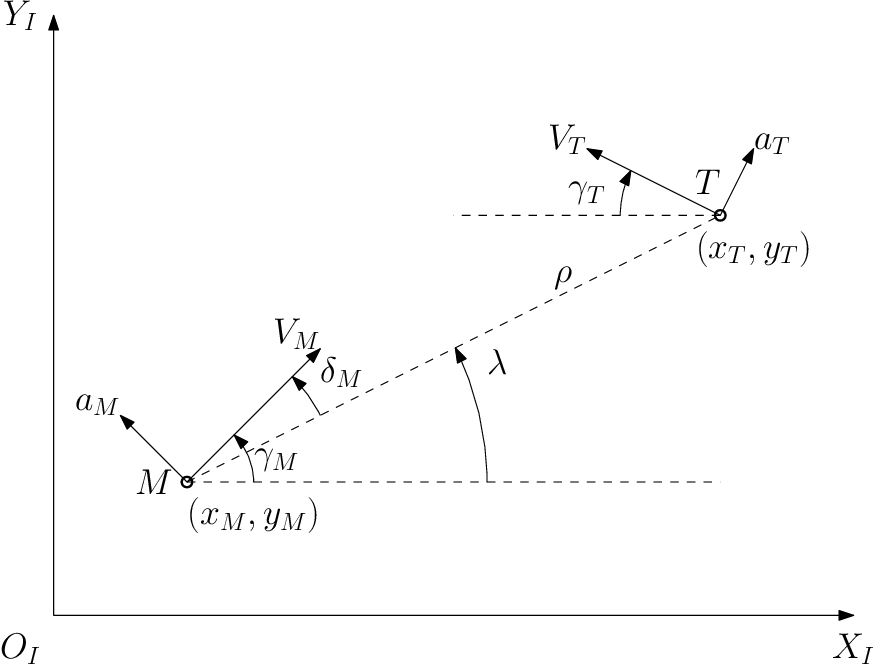}
\par\end{centering}
\caption{Planar engagement geometry}
\label{fig:Planar-engagement-geometry}
\end{figure}

We use the following underlying assumptions:
\begin{enumerate}
\item Both the interceptor and the target are represented as point
  masses.
\item \label{ass2} The interceptor's own path angle and lateral
  acceleration are known (e.g., via the interceptor's own navigation
  system).
\item The speeds of the interceptor and target, $V_{M}$ and $V_{T}$,
  respectively, are known and time-invariant.
\item The target's acceleration command, $u_{T}$, belongs to a known
  admissible set comprising $M$ possible maneuvers. For simplicity of
  exposition, we consider only the case of $M=2$, which means that the
  target performs bang-bang maneuvers (the generalization to $M > 2$
  is trivial). Known to be the optimal, perfect-information evasion
  strategy in bounded acceleration pursuit-evasion games, its
  associated acceleration command is
\begin{equation}
	u_{T}=\begin{cases}
		+a_{T}^{max} & m=1\\
		-a_{T}^{max} & m=2
	\end{cases}
	\label{Switch_uT}
\end{equation}
where $m$ denotes the target maneuver mode.
\item \label{ass4} The interceptor and the target possess first-order
  dynamics with known time constants, $\tau_{M}$ and $\tau_{T}$,
  respectively.
\item \label{ass5} The lateral acceleration bounds of the interceptor
  and target are known constants, $a_{M}^{\max}$ and $a_{T}^{\max}$,
  respectively.
\end{enumerate}

Following common practice, we use polar coordinates to present the
equations of motion (EOM). Defining the interceptor's state vector\textcolor{blue}{, used by its estimator,} as
\begin{equation}
\textbf{x}_{M} = \begin{bmatrix}
  \rho & \lambda & \gamma_{T} & a_{T}
\end{bmatrix}^{T}
\label{eq:StateVec1}
\end{equation}
yields the following  EOM
\begin{subequations}
\begin{align}
	\dot{\rho}       & =  V_{\rho}                                       \\
	\dot{\lambda}    & =  \frac{V_{\lambda}}{\rho}                       \\
	\dot{\gamma}_{T} & =  \frac{a_{T}}{V_{T}}                            \\
	\dot{a}_{T}      & =  -\frac{a_{T}}{\tau_{T}}+\frac{u_{T}}{\tau_{T}}
\end{align}
\label{eq:2}
\end{subequations}
where $u_{T}$ is the target's acceleration command, and
\begin{subequations}
\begin{align}
	V_{\rho}    & =  - (V_{M}\cos\delta_{M}+V_{T}\cos\delta_{T}) \\
	V_{\lambda} & = -V_{M}\sin\delta_{M}+V_{T}\sin\delta_{T}     \\
	\delta_{M}  & = \gamma_{M}-\lambda, \quad \delta_{T} = \gamma_{T}+\lambda
\end{align}
\label{eq:EOM1}
\end{subequations}

The interceptor's path angle and lateral acceleration satisfy the
following evolution equations
\begin{subequations}
	\begin{align}
	\dot{\gamma}_{M} & =  \frac{a_{M}}{V_{M}}                            \\
	\dot{a}_{M}      & =  -\frac{a_{M}}{\tau_{M}}+\frac{u_{M}}{\tau_{M}}
	\end{align}
	\label{eq:EOM2}
\end{subequations}
where $u_{M}$ is the interceptor's acceleration command.

We assume that the interceptor can measure the bearing angle
$\delta_{M}$ between its own velocity vector and the LOS to the
target, rendering the following measurement equation
\begin{equation}
y = \delta_{M}+\nu = \gamma_{M}-\lambda+\nu\label{eq:Meas}
\end{equation}
where $\nu$ is the (possibly non-Gaussian) measurement noise.

\section{The Perfect Information DGL1 Guidance Law}
\label{sec:PI-DGL1}
Derived within a differential game setting, the DGL1 guidance
law~\cite{gutman_optimal_1979,shinar_solution_1981} is the optimal
solution to the linearized case of perfect-information pursuit-evasion
games involving players with first-order dynamics and bounded
acceleration commands.  Because the DGL1 law constitutes the basis for
our new GST-compliant guidance law, we briefly review it in this
section.

The DGL1 guidance law is
\begin{equation}
u_{M}=  a_{M}^{\max}\sign(Z_{\text{DGL1}})
\label{Eq:OriginalDGL1}
\end{equation}
where the ZEM is computed according to~\cite{shaferman_stochastic_2016}
\begin{equation}
Z_{\text{DGL1}} (t) = \xi+\dot{\xi}t_{go}-a_{M}^{n}\tau_{M}^{2}\Psi
(\theta)+a_{T}^{n}\tau_{T}^{2}\Psi\big( \theta/\epsilon \big).
\label{eq:ZEM}
\end{equation}
In Eq.~\eqref{eq:ZEM}, $\xi$ and $\dot{\xi}$ are the relative displacement and relative velocity between the LOS and the target, measured along the normal to the instantaneous LOS, respectively.
\textcolor{blue}{Accordingly, the interceptor's and target's normal accelerations, $a_{M}^{n}$ and $a_{T}^{n}$, are defined with respect to the instantaneous LOS angle $\lambda$:
\begin{subequations}
	\begin{align}
    a_{M}^{n} &= a_{M} \cos(\gamma_M - \lambda), \\
    a_{T}^{n} &= a_{T} \cos(\gamma_T + \lambda).
	\end{align}
\end{subequations}}
Additionally,
\begin{subequations}
	\begin{align}
	\Psi(\theta) &= e^{-\theta} + \theta - 1 \\
	\theta &= \frac{t_{go}}{\tau_{M}} \\
        \epsilon &= \frac{\tau_{T}}{\tau_{M}}
	\end{align}
\end{subequations}
The ZEM's evolution equation is
\begin{equation}
\dot{Z}_{\text{DGL1}}\left(t\right)=-\tau_{M}\Psi(\theta)u_{M}+\tau_{T}\Psi\big(\theta/\epsilon\big)u_{T}.
\label{eq:z_ODE}
\end{equation}

Let
\begin{equation}
\mu \dfn \frac{a_{M}^{\max}}{a_{T}^{\max}}
\end{equation}
be the maneuverability ratio.  Fig.~\ref{fig:GameSpace} presents the
game space for the case where the interceptor possesses
maneuverability and agility superiority over the target, that is,
$\mu>1$ and $\mu\epsilon>1$.  The lines shown in this figure are
equicost lines, corresponding to cases where both the interceptor and
the target apply their respective optimal strategies until the end of
the game, thus enforcing a time-invariant ZEM, which, at the end of
the game, becomes the final miss distance.  Corresponding to zero miss
distance, the two solid lines define the boundary of the singular
region, denoted as $\mathcal{Z}^{*}$. When the game's state is inside
$\mathcal{Z}^{*}$, \emph{any} pursuer's acceleration command is
optimal, in the sense that it guarantees hit-to-kill performance.
Because of this non-uniqueness, a linear command is commonly used to
prevent control chattering, rendering the following practical form of
the DGL1 law
\begin{equation}
u_{M} = \begin{cases}
a_{M}^{\max} \sat\big(\frac{Z_{\text{DGL1}}}{k\partial\mathcal{Z}^{*}}\big) & Z_{\text{DGL1}}\in\mathcal{Z}^{*}\\
a_{M}^{\max} \sign( Z_{\text{DGL1}}) & Z_{\text{DGL1}} \notin \mathcal{Z}^{*}\\
\end{cases}
\label{Eq:DGL1}
\end{equation}
where $\sat(\cdot)$ stands for the saturation function, $0 < k \leq 1$
is the portion of the singular region in which the acceleration is
(arbitrarily set to be) linear, $\partial\mathcal{Z}^{*}$ is the
boundary of the singular region, computed as
\begin{equation}
	\partial \mathcal{Z}^{*}  = a_{M}^{\max} \tau_{M}^{2} \Upsilon(\theta)
	- a_{T}^{\max} \tau_{T}^{2} \Upsilon(\theta / \epsilon)
	\label{eq:sing_reg}
      \end{equation}
      and
      \begin{equation}
	\Upsilon(\theta) = \frac{1}{2} \theta^{2} - e^{-\theta} - \theta + 1.
\end{equation}
\begin{figure}[tb]
	\includegraphics[width=3.25in]{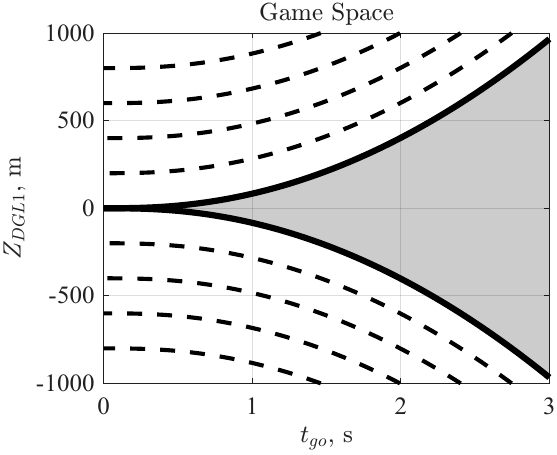}
	\centering
	\caption{DGL1 game space \textcolor{blue}{and its singular region (grey)}. Equicost level lines (dashed) and boundary lines (corresponding to zero miss distance) of the singular region, $\partial \mathcal{Z}^{*}$ (solid).}
	\label{fig:GameSpace}
\end{figure}

The calculation of the ZEM requires the time-to-go, which cannot be
computed accurately in real time. Therefore, the following
approximation is used
\begin{equation}
t_{go}\approx-\frac{\rho}{V_{\rho}}\label{eq:t_go}
\end{equation}
along with the following relation, used in a nonlinear setting when
linearizing about the instantaneous LOS
\begin{equation}
\xi = \rho \sin \lambda.
\end{equation}
Thus, in a nonlinear setting, the ZEM of the well-known proportional
navigation (PN) guidance law is approximated by
\begin{equation}
Z_{\text{PN}}=\xi+\dot{\xi}t_{go}\approx
-V_{\rho}t_{go}^{2}\dot{\lambda}\cos\lambda
=t_{go}V_{\lambda}\cos\lambda
\label{eq:Z_PN}
\end{equation}
and the ZEM of the DGL1 law is computed via
\begin{equation}
Z_{\text{DGL1}} = Z_{\text{PN}}+Z_{\text{acc}}
\label{eq:Z_formula}
\end{equation}
where
\begin{equation}
Z_{\text{acc}} \dfn -a_{M}^{n}\tau_{M}^{2}\Psi(\theta)+a_{T}^{n}\tau_{T}^{2}\Psi\big(\theta/\epsilon\big).
\label{eq:Z_acc}
\end{equation}

The DGL1 game space is constructed solely by the ZEM and the
time-to-go, as shown in Fig.~\ref{fig:GameSpace}. To directly estimate the ZEM and the time-to-go, we define a new state vector \textcolor{blue}{denoted by $\bm{\chi}_M$, which differs from the physical state vector $\bm{x}_M$ defined in Eq.~\eqref{eq:StateVec1}}:
\begin{equation}
\textcolor{blue}{\bm{\chi}_{M}} = \begin{bmatrix}
t_{go} & Z_{\text{DGL1}} & \lambda & \gamma_{T}
\end{bmatrix}^{T}.
\label{StateVec}
\end{equation}
This yields the EOM as
\begin{subequations}
	\label{eq:EOM_new}
	\begin{align}
	\dot{t}_{go} & = -1-\frac{\left(a_{M}\sin\delta_{M}+a_{T}\sin\delta_{T}\right)t_{go}}{V_{\rho}}+\left(\frac{V_{\lambda}}{V_{\rho}}\right)^{2}\\
	\dot{Z}_{\text{DGL1}}& =
	\dot{Z}_{\text{PN}} + \dot{Z}_{\text{acc}} \\
	\dot{\lambda} & = \frac{V_{\lambda}}{\rho}\\
	\dot{\gamma}_{T} & = \frac{a_{T}}{V_{T}}
	\end{align}
\end{subequations}
\noindent where $\dot{Z}_{\text{PN}}$ and $\dot{Z}_{\text{acc}}$ are computed by
differentiating Eqs.~\eqref{eq:Z_PN} and \eqref{eq:Z_acc} with respect
to time, respectively. Alternatively, Eq.~\eqref{eq:z_ODE} can be used
as the EOM of $\dot{Z}_{\text{DGL1}}$ in the linearized
case. Moreover, using Eq.~\eqref{eq:t_go} and
Eq.~\eqref{eq:Z_formula}, we can compute $\rho$ and $a_{T}$ via
\begin{subequations}
	\begin{align}
	\rho= & -V_{\rho}t_{go} \\
	a_{T}= & \frac{Z_{\text{DGL1}}-t_{go}V_{\lambda}\cos\lambda+a_{M}^{n}\tau_{M}^{2}\Psi\left(\theta\right)}{\cos\delta_{T}\tau_{T}^{2}\Psi\big(\theta/\epsilon\big)}.
	\end{align}
	\label{eq:trans}
\end{subequations}

\section{Target Tracking IMMPF}
\label{sec:Estimation-process}
In this section, we present the estimation algorithm we use to drive
the new stochastic guidance law to be derived in the sequel.  We first
present a generic IMMPF algorithm. Then, we adapt this algorithm to
suit the problem at hand, by addressing time-varying transition
probabilities and a particular initialization of the filter.
\subsection{The IMMPF Algorithm}
The IMMPF algorithm is a multiple model sequential MC
estimator~\cite{blom_exact_2007}.  We select the IMMPF for its ability
to cope with nonlinear dynamics, non-Gaussian driving noises, and
multiple models with non-Markovian switching modes.  
\textcolor{blue}{Crucially, unlike the Kalman filter and its variants, the IMMPF provides a particle-based estimate of the entire posterior PDF at each time step. This feature is essential because we explicitly exploit the full probability distribution in the guidance law design.}
Generally following the
logic of the IMM estimator~\cite{blom_interacting_1988}, in the IMMPF,
the resampling and interaction stages are combined before the
prediction and filtering stages.  The filter runs a bank of particle
filters (PFs) matched to all possible modes.  At each time $t_{k}$ the
PDF is represented by the set of particles and associated \textcolor{blue}{scalar} weights
$\{ \textbf{x}_{k}^{m,s}, w_{k}^{m,s} \mid m\in\mathbb{M}, s\in \{1,
\dots, S\}\} $, where $\textbf{x}_{k}^{m,s}$ is the state vector of
particle $s$ and mode $m$, and $w_{k}^{m,s}$ is its associated weight.
$\mathbb{M}$ is the set of $M$ discrete modes, and $S$ is the number
of particles of each mode, rendering the total number of particles
$N_{p}=MS$.
\subsection{Target Tracking IMMPF}
\label{subsec:TTIMMPF}
To implement the IMMPF in our problem we use  the models presented
in Sec.~\ref{Prob_Form}.  
Two discrete modes are used for the target
acceleration command in Eq.~\eqref{Switch_uT}.  The prediction step
uses the nonlinear EOM in Eq.~\eqref{eq:EOM_new}, recast in
discrete-time as
\begin{equation}
\textbf{x}_{k} = \textbf{f}_{k-1} (\textbf{x}_{k-1},m_{k},\textbf{w}_{k})
\label{eq:propa}
\end{equation}
where $\textbf{x}_{k}$ is the interceptor relative state at time
$t_{k}$ and $m_{k}$ is the target mode during the time interval
$( t_{k-1}, t_{k}]$. The covariance of the zero-mean process noise
$\textbf{w}_{k}$ is used as a tuning parameter of the filter.  The
filter step uses the measurement equation~\eqref{eq:Meas} to calculate
the likelihood density $p(\textbf{z}_{k} \given \textbf{x}_{k},m_{k})$.
\textcolor{blue}{Note that the particle propagation step uses the nonlinear dynamics derived in Eq.~\eqref{eq:EOM_new}. Since the interceptor's own acceleration $u_M$ is known, it is treated as a deterministic input in the filter process model, ensuring the state evolution is consistent with the engagement kinematics.}
\subsubsection{Mode Estimation}
\label{subsec:ModeEst}
The target acceleration command sequence is commonly assumed to obey a
homogeneous Markov switching model.  This assumption presupposes that
the target's flight controller is indifferent with regard to the
timing of the acceleration switch maneuver, which is equivalent to
assuming that the target cannot affect the resulting miss distance by
properly timing this maneuver. This assumption, however, has been
shown to be wrong in, e.g.,~\cite{shaferman_stochastic_2016}, which
demonstrates that a `smart' target can enforce a large miss distance
by properly timing its acceleration switch maneuver. \textcolor{blue}{To} address such a target, we use a non-homogeneous Markov chain model for the \textcolor{blue}{transition probability matrix (TPM):}
\begin{equation}
  \Pi(t_{k})=
  \begin{bmatrix}
  	1-\pi^{12}_{k} & \pi^{12}_{k}   \\
  	\pi^{21}_{k}   & 1-\pi^{21}_{k}
  \end{bmatrix}
  \label{eq:TVTPM}
\end{equation}
where \textcolor{blue}{
\begin{equation}
    \pi_{k}^{ij} \triangleq \Pr(m_k=j | m_{k-1}=i), \quad \forall i,j=1,2
\end{equation}
are the transition probabilities,} at time $t_{k}$.


The time-varying TPM allows us to cope better with sophisticated
targets capable of using optimal evasive maneuvers, e.g., targets
using the guidance law presented
in~\cite{shaferman_near-optimal_2021}, which exploits the
interceptor's inherent estimation error. By increasing the transition
probabilities over the optimal evasive maneuver's time window, the
effect of this maneuver can be mitigated.
\subsubsection{Filter Initialization}
We assume that the interceptor's estimator is initialized by a
radar. Figure~\ref{fig:init-radar-geometry} presents a schematic view of
the assumed planar geometry of the initializing radar, the interceptor
and the target. Variables associated with the radar are denoted by
additional subscript R. The range between the radar and the target is
denoted by $\rho_{R}$, and $\lambda_{R}$ is the angle between the
radar's LOS to the target and the $X_{I}$ axis.
\begin{figure}[tb]
	\begin{centering}
		\includegraphics[width=3.25in]{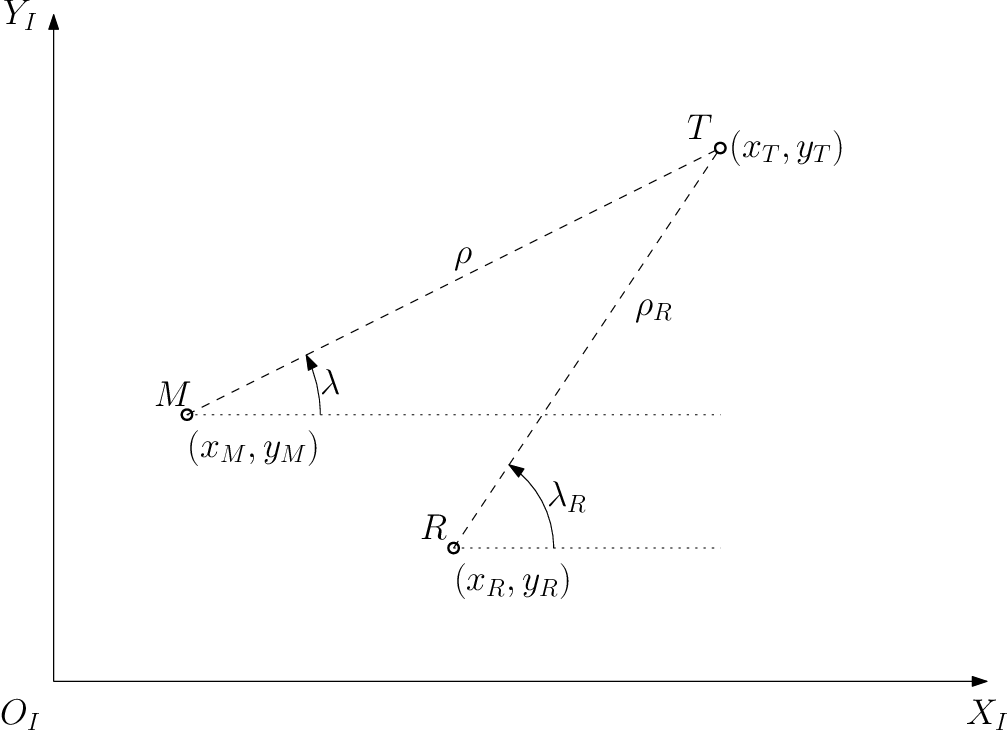}
		\par\end{centering}
	\caption{Initializing radar geometry}
	\label{fig:init-radar-geometry}
\end{figure}

The radar measures the following vector:
\begin{equation}
  \textbf{x}_{R} = \begin{bmatrix}
    \rho_{R} & \lambda_{R} & \gamma_{T} & a_{T}
  \end{bmatrix}^{T}
\label{eq:RadarSV}
\end{equation}
This vector is then passed on to the interceptor's estimator. We
further assume that the radar's position $\left(x_{R},y_{R}\right)$
and the interceptor position at initialization
$\left(x_{M},y_{M}\right)$ are known. 
The geometric relation between the interceptor's relative state and
radar measurement is
\begin{subequations}
	\begin{align}
          \rho &=  \sqrt{ \rho_{R}^{2} + \Delta R_{R}^{2} + 2\rho_{R} [\Delta X_{R} \cos(\lambda_{R}) + \Delta Y_{R} \sin(\lambda_{R})]}\\
          \lambda &= \arctan \left[ \frac{\Delta Y_{R} + \rho \sin(\lambda_{R})}{\Delta X_{R} + \rho \cos(\lambda_{R})} \right]
	\end{align}
	\label{eq:RadarInit}
\end{subequations}
where
\begin{equation}
\Delta X_{R} \dfn x_{R} - x_{M}, \quad 
\Delta Y_{R} \dfn y_{R} - y_{M}, \quad 
\Delta R_{R} \dfn \sqrt{ \Delta X_{R}^2 + \Delta Y_{R}^2 }.
\end{equation}
Using Eqs.~\eqref{eq:t_go} and \eqref{eq:Z_formula} then enables
initialization of the state vector in Eq.~\eqref{StateVec}.
\textcolor{blue}{These geometric relations enable the direct
  initialization of the particle cloud from the radar
  estimate. Specifically, we draw random samples from the radar
  estimate's PDF and map them through Eqs.~\eqref{eq:RadarInit} to
  generate the initial state particles for the filter.}

\section{Estimation-Aware Guidance}
\label{sec:Est_awa_GL}
In this section, we derive this paper's main result: a GST-compliant, computationally-efficient stochastic guidance law, that is based on the perfect-information DGL1 guidance law. 
\textcolor{blue}{We begin by presenting Bayesian decision theory, inspired by its derivation in~\cite{trees_detection_2004}, and then frame the stochastic guidance problem as a multi-hypothesis Bayesian decision problem.}
We then proceed by deriving conditions for making optimal Bayesian decisions at a minimal computational cost, and conclude this section by augmenting the new guidance law with an information-enhancing trajectory shaping algorithm that exploits the decision ambiguity frequently occurring in the early stages of the interception.
\textcolor{blue}{
\subsection{Bayesian Decision Theory}
The decision problem seeks to find an optimal rule that chooses the
optimal hypothesis by minimizing a risk function over $n$ given
hypotheses $\{ H_i \}_{i=1}^n$.
Following~\cite{trees_detection_2004}, we define the Bayesian risk
$\mathcal{R}$ to be the expected value of the cost function.  Given
the set of available measurements $\meas$, we seek to minimize the
conditional risk (or expected posterior loss):
\begin{equation}\label{eq:riskdef}
  \mathcal{R}(H_i \given \meas) \dfn \expect\left[J(H_i) \given \meas\right]
  =\sum_{j=1}^{n} C_{ij} \Pr(H_{j} \given \meas),
\end{equation}
where $J(H_i)$ is the cost resulting from deciding $H_i$, $C_{ij}$ is the cost incurred when the decision rule decides
that hypothesis $H_{i}$ is true when, in fact, hypothesis $H_{j}$ is
true, and $\Pr(H_{j} \given \meas)$ is the posterior probability that
hypothesis $H_j$ is true.}

\textcolor{blue}{
To facilitate the minimization, we decompose the total risk into a fixed, unavoidable component (the risk incurred even if the true hypothesis were known) and a variable, decision-dependent component. We define the expected additional risk $\mathcal{B}_i$ as the difference between the (decision-dependent) conditional risk of deciding $H_i$, and the (unavoidable) optimal conditional risk, obtained as the sum of all possible correct decisions~\cite{trees_detection_2004}:
\begin{equation}
  \mathcal{B}_i(\meas) \dfn \mathcal{R}(H_i \given \meas) - \sum_{j=1}^{n} C_{jj} \Pr(H_{j} \given \meas).
\end{equation}
Substituting Eq.~\eqref{eq:riskdef} into this definition yields:
\begin{equation}\label{eq:add_risk}
    \mathcal{B}_i(\meas) = \sum_{j=1}^{n} \Pr(H_{j} \given \meas) (C_{ij} - C_{jj}),
\end{equation}
where it is noted that the summand corresponding to $j=i$ vanishes.}

\textcolor{blue}{ Now, using Bayes' rule, the posterior probability
  can be decomposed to explicitly exhibit the contribution of the
  prior knowledge and the contribution of the evidence provided by the
  measurements. Thus
\begin{equation}\label{eq:bayes_decomp}
  \Pr(H_{j} \given \meas) \propto P(\meas \given H_j) P_j,
\end{equation}
where $P_j$ denotes the prior probability of hypothesis $H_j$, and
  $P(\meas \given H_j)$ denotes the likelihood of observing the
  measurements conditional on hypothesis $H_j$ being true.  Using the
posterior decomposition~\eqref{eq:bayes_decomp} in
Eq.~\eqref{eq:add_risk} and defining the unnormalized additional
  risk function $I_i(\meas)$ to be
\begin{equation}
  I_{i} (\meas) \dfn \sum_{\substack{j=1 \\ j\neq i}}^{n}
  P_{j} P(\meas \given H_j) (C_{ij} - C_{jj}), \quad i=1,\dots,n.
    \label{Eq:I}
\end{equation}
renders the optimal Bayesian decision rule as:
  \begin{equation}
 \text{ Decide } H_{i^\star},  \text{ where }   i^{\star}
         \dfn \arg\min_{i\in \{1,...,n\} }\{ I_{i}(\meas)\}.
    \label{Eq:Decision}
  \end{equation}} \textcolor{blue}{
\subsection{Stochastic Guidance as a Bayesian Decision Problem}
\label{subsec:Bayes4int}
As Eq.~\eqref{Eq:DGL1} shows, the DGL1 guidance law depends on
knowing, at each point in time, where the state of the game is: inside
the singular region, above it, or under it (see
Fig.~\ref{fig:GameSpace}).  The commands computed for these states are
vastly different from each other, and, clearly, computing a guidance
command appropriate for one state of the game (e.g., inside the
singular region) when, in fact, another state (e.g., outside and above
the singular region) is valid, might lead to catastrophic
results, i.e., a significant increase in the miss distance. Under the perfect information assumption, identifying the
true state of the game is not an issue, and the perfect information
guidance law can be precisely computed and implemented.  However, in
the stochastic case, an estimator has to be implemented, and the state
of the game is not deterministically known but, instead, has to be
estimated. In this case, the determination of the state of the game
becomes a hard problem, admitting only a probabilistic answer.}

\textcolor{blue}{ 
To address the problem in a concrete way, consider again Fig.~\ref{fig:GameSpace}, and define the following four hypotheses about the state of the game. 
\begin{description}
\item[$H_{1}$:] the state is above the singular region.
\item[$H_{2}$:] the state is inside the singular region, and the
  target acceleration mode is $m = 1$.
\item[$H_{3}$:] the state is inside the singular region, and the
  target acceleration mode is $m = 2$.
\item[$H_{4}$:] the state is under the singular region.
\end{description}
In the regular regions, $H_1$ and $H_4$, the target's optimal strategy is uniquely determined by the game solution~\eqref{Eq:DGL1}. In contrast, within the singular region, the target's optimal control is non-unique; thus, $H_2$ and $H_3$ are distinguished explicitly by the target's maneuver mode.
To illustrate these hypotheses, as well as the utility and function of
the IMMPF filter, Fig.~\ref{fig:part_Hyp_map} superimposes the
particle cloud, which the IMMPF maintains to represent the posterior
PDF at a particular moment in time, on a part of the DGL1's game
space. The particles are color-coded according to the hypotheses they
are associated with.
\begin{figure}[tbh]
\begin{center}
\includegraphics[width=3.25in]{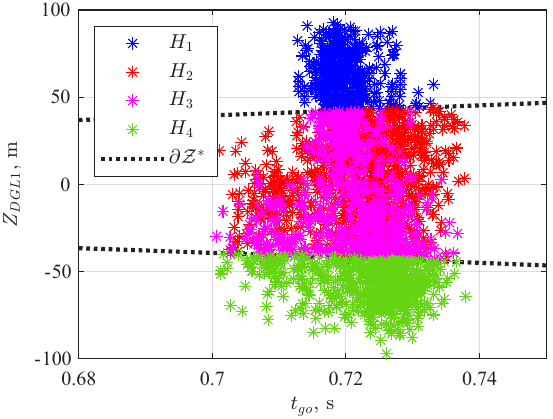}
\end{center}
\caption{\textcolor{blue}{IMMPF particle cloud distribution, color-coded by hypothesis, relative to the singular region boundaries at a particular moment during the interception.}}
\label{fig:part_Hyp_map}
\end{figure}}

\textcolor{blue}{
As can be easily seen from Fig.~\ref{fig:part_Hyp_map}, in the
stochastic case there is no deterministic answer to the question:
where is the state of the game at a particular moment in time?
Instead, we can talk about the probability that the state resides in a
specified zone in the game space. Thus, straightforwardly implementing
a perfect-information guidance law, such as the DGL1 law, is both
conceptually and practically wrong. Because the unavoidable
uncertainty regarding the state of the game can lead to catastrophic
results, we employ Bayesian decision, which explicitly takes into
account the uncertainty and the costs of wrong decisions. }

\textcolor{blue}{
To frame the problem as a Bayesian decision one, we define the cost
function $J$ to be the final miss distance of the stochastic
interception problem.
\textcolor{blue}{In the context of a zero-sum differential game with miss distance as the payoff, the cost function represents the guaranteed miss distance. When the state lies in the regular region (outside the singular region), the value of the game is exactly the distance from the current state to the boundary of the singular region.} 
Therefore, we define the cost $C_{ij}$ to be the final miss distance obtained when the guidance law assumes, for some finite horizon, that hypothesis $H_{i}$ is true when, in fact, hypothesis $H_{j}$ is true.  Based on its definition, the cost is computed based on the location of the state of the game within the game space.  Thus, in our specific case, when the state is such that the associated ZEM is inside the singular region, the cost is zero. When the ZEM is outside the singular region, the cost is the distance between the current ZEM and the singular region's nearest boundary.}

\textcolor{blue}{
Applying the Bayesian decision rule~\eqref{Eq:Decision} to the
interception problem involves the computation of the
likelihoods, the decision costs, and the prior
probabilities of the hypotheses. We do these calculations using the
IMMPF's particle states and weights, as detailed in the ensuing.}

\begin{description}
\item[\textcolor{blue}{Likelihood Functions.}]
  \textcolor{blue}{Having the particle-based density approximation
      generated by the IMMPF on hand, we approximate the likelihoods
    by summing the IMMPF’s weights of the particles associated with
    each hypothesis.  We denote the weight of the $j'$-th particle at
    time $k$ by $w_k^{j'}$.  Consequently:
    \begin{equation}
    \label{eq:likelihood}
    P(\meas \given H_j) = \sum_{j'\in H_{j}} {w}^{j'}_{k}, \quad j \in \{1, \dots,n\}
    \end{equation}
    To gain some insight into the effect of this term, consider the
    (extreme) case where the particles are entirely concentrated in
    $H_\ell$ for some $\ell \in \{1,\dots,n\}$.  In this case
    Eq.~\eqref{eq:likelihood} implies
  \begin{equation}
    \label{eq:likelihood_example}
    P(\meas \given H_j) =
    \begin{cases}
      1 & j=\ell  \\
      0 &  j\ne \ell 
      \end{cases}
  \end{equation}
  and Eq.~\eqref{Eq:I} yields 
  \begin{equation}
    \label{eq:I_example}
    I_i =
    \begin{cases}
      P_\ell (C_{i\ell }-C_{\ell \ell }) & i\ne \ell  \\
      0  &  i=\ell 
    \end{cases}
  \end{equation}
  Noting that $C_{i\ell }-C_{\ell \ell }\ge 0$, by our earlier assumption, then
  leads to $i^\star = \ell $, and the optimal decision is $H_\ell $.}

\item[\textcolor{blue}{Costs.}]
  \textcolor{blue}{In calculating the costs, we have to separately address the following four distinct cases:}
	\begin{enumerate}
        \item \textcolor{blue}{The cost of correctly choosing hypothesis $H_{j}$ when
          $P(\meas \given H_j)>0$
          is 
		\begin{equation}
			C_{jj} =  \sum_{j'\in H_{j}} \tilde{w}^{j'}_{k}
			\max (0, \abs{ z_{j'}(t_{go;j'})} - z^{*}(t_{go;j'}))
			\label{eq:c_jj}
		\end{equation}
                where
                $z_{j'}$ is the ZEM corresponding to particle
                $j'$ that belongs to hypothesis
                $H_{j}$,
                $z^{*}(t_{go;j'})$ is the singular region boundary
                point corresponding to the time-to-go of that
                particle, and
                $\tilde{w}^{j'}_{k}$ is the corresponding normalized
                weight at time $t_{k}$:
                \begin{equation}
                  \tilde{w}^{j'}_{k} = \frac{w^{j'}_{k}}{\sum_{j'\in H_{j}} {w}^{j'}_{k}}.
		\end{equation}}
	
            \item \textcolor{blue}{The cost of wrongly choosing
                hypothesis $H_{i}$ when $P(\meas \given H_i) >0$, when
                hypothesis $H_{j}$ is true and
                $P(\meas \given H_j) >0$, is
		\begin{equation}
			C_{ij} =  \sum_{j'\in H_{j}} \tilde{w}^{j'}_{k} \sum_{i'\in H_{i}} \tilde{w}^{i'}_{k}   
			\max (0,\abs{z_{i'j'}(t_{go;j'} - \tau)}-z^{*} (t_{go;j'} - \tau))
			\label{eq:c_ij1}
		\end{equation}
		where ${z}_{i'j'}( t_{go;j'}-\tau)$ is the ZEM of particle $j'$ that belongs to hypothesis $H_{j}$ which is propagated ahead for a predefined horizon, $\tau$.
        \textcolor{blue}{The parameter $\tau$ is the prediction duration used to evaluate the cost of the decision; in the simulation study, we set $\tau = \min(\tau_{\max}, t_{go;j'})$, where $\tau_{\max}$ is the maximal prediction duration.}
        The propagation is performed over the interval $[t_{go;j'},t_{go;j'}-\tau]$.
        During this interval, we apply the acceleration command of particle
        $i'$ which belongs to hypothesis
        $H_{i}$ using the linear acceleration command from
        Eq.~\eqref{Eq:DGL1} and assuming that the acceleration
        command is constant during the time interval.}

    \item \textcolor{blue}{The cost of choosing wrongly hypothesis
        $H_{i}$ when $P(\meas \given H_i)=0$, when hypothesis $H_{j}$
        is true and $P(\meas \given H_j)>0$, is
		\begin{equation}
			C_{ij}= \sum_{j'\in H_{j}} \tilde{w}^{j'}_{k}
			\max (0,\abs{z_{i'j'}(t_{go;j'} - \tau)} - z^{*} (t_{go;j'} - \tau)).
			\label{eq:c_ij2}
		\end{equation}}
		
            \item \textcolor{blue}{If $P(\meas \given H_j) = 0$, then
                we can set $C_{jj}$ and $C_{ij}$ arbitrarily,
                according to the optimal Bayesian decision rule of
                Eq.~\eqref{Eq:I}.}
	\end{enumerate}
		
	\item[\textcolor{blue}{Prior probabilities.}] 
	\textcolor{blue}{The prior \textcolor{blue}{probability for each hypothesis $j$, $P_j$, is} calculated based on the measurement history available at time $t_{k-1}$, denoted by $\meas_{k-1}$, prior to acquiring the new measurement at time $t_{k}$. Let SW denote the event that the target has switched its acceleration command in the time interval $[t_{k-1},t_{k}]$, and let NSW denote the complementary event. Thus, using the law of total probability yields
	\begin{equation}
		P_{j}  = \Pr(H_{j} \mid \meas_{k-1},\text{SW}) \Pr(\text{SW}) 
		+ \Pr(H_{j} \mid \meas_{k-1},\text{NSW}) \Pr(\text{NSW}).
		\label{eq:prior}
	\end{equation}
	The conditional hypothesis probabilities are obtained via propagating
	forward the particle cloud available at time $t_{k-1}$ according to
	whether a switch has occurred or not. The probabilities
	$\Pr(\text{SW})$ and $\Pr(\text{NSW})$ are obtained via using again
	the law of total probability while conditioning on the target's mode
	$r_{k-1}$ at time $t_{k-1}$
	\begin{equation}
		\Pr(\text{SW})  = \Pr(m_{k}=1 \mid m_{k-1}=2)\Pr(m_{k-1}=2)
		+ \Pr(m_{k}=2 \mid m_{k-1}=1)\Pr(m_{k-1}=1)
		\label{eq:SW}
	\end{equation}
	and
	\begin{equation}
		\Pr(\text{NSW}) = \Pr(m_{k}=1 \mid m_{k-1}=1)\Pr(m_{k-1}=1)
		+ \Pr(m_{k}=2 \mid m_{k-1}=2)\Pr(m_{k-1}=2).
		\label{eq:NSW}
	\end{equation}
	The target's modal probabilities at time $t_{k-1}$ are
        obtained from the estimator, and the transition probabilities
        populate the TPM, which is assumed to be known.}
\end{description}
\textcolor{blue}{ Algorithm~\ref{alg:estGL} presents a schematic
  description of one cycle of the usage of the Bayesian decision
  criterion for the interception problem.}
\begin{algorithm}[tbh]
  \SetAlgoNoLine \SetAlgoLongEnd \DontPrintSemicolon
  \textcolor{blue}{Obtain the weights and samples of the current and
    previous steps
    \;
    \For{$j = 1,2,3,4$}{ Calculate the likelihood of hypothesis
      $H_{j}$ via Eq.~\eqref{eq:likelihood}\;
      \eIf{$P(\meas \given H_j) > 0$}{ \For{$i = 1,2,3,4$}{
          \eIf{$i \neq j$}{ \eIf{$i\in\{2,3\}$ and
              $P(\meas \given H_i) > 0$}{ Calculate the cost $C_{ij}$
              via Eq.~\eqref{eq:c_ij1}}{ Calculate the cost $C_{ij}$
              via Eq.~\eqref{eq:c_ij2}} }{ Calculate the cost $C_{jj}$
            via Eq.~\eqref{eq:c_jj}} } Calculate the prior probability
        $P_{j}$ via Eqs.~(\ref{eq:prior}--\ref{eq:NSW}) }{ Set
        $C_{ij} = 0$ for all $i=1,2,3,4$\; Set $P_{j} = 0$} Calculate
      the expected additional risk of choosing hypothesis $H_{j}$ via
      Eq.~\eqref{Eq:I} } Choose the optimal Bayesian decision via
    Eq.~\eqref{Eq:Decision}}
	\caption{\textcolor{blue}{The usage of the Bayesian decision criterion for the interception problem}}
	\label{alg:estGL}
\end{algorithm}

\subsection{Computational Effort Reduction}
\label{sec:cond}
The computational effort for determining all the expected excess risks
of choosing a hypothesis can be significant, depending on the number
of particles used by the filter.  In this subsection, we present
conditions for determining whether the entire calculation mechanism of
the Bayesian decision criterion is needed, or, instead, only a reduced
calculation can be employed without losing optimality.
\textcolor{blue}{We emphasize that this reduction mechanism does not involve reducing the particle population size. Instead, it avoids the computationally expensive cost integration steps when the particle cloud is entirely contained within specific regions of the game space where the optimal decision is unambiguous.}

We begin by defining the set of all particles having nonzero weights
at time $t_{k}$ as $\mathcal{J}_{k}$, that is
\begin{equation}
	\mathcal{J}_{k} \dfn \big\{j' \mid {w}_{k}^{j'} \neq 0 \big\}.
\end{equation}
Note that the complement set of $\mathcal{J}_{k}$ has no bearing on
computation of the costs or the likelihoods. Next, define the maximal acceleration command applied by the nonzero-weight particles in hypotheses $H_{2}$ or $H_3$ as
\begin{equation}
  \overline{u}_{i} \dfn  \arg \max_{j'\in \mathcal{J}_{k} \cap H_{i}} \left\lbrace \frac{z_{j'}(t_{go;j'})}{z^{*}(t_{go;j'})} \right\rbrace,\quad i=2,3
\end{equation}
where $|\overline{u}_{i}| < a_{M}^{\max}$.
Analogously, define the minimal acceleration command applied by the
nonzero-weight particles in hypotheses $H_{2}$ or $H_3$ as
\begin{equation}
  \underline{u}_{i} \dfn  \arg \min_{j'\in \mathcal{J}_{k} \cap H_{i}} \left\lbrace \frac{z_{j'}(t_{go;j'})}{z^{*}(t_{go;j'})} \right\rbrace,\quad i=2,3
\end{equation}
where $|\underline{u}_{i}| < a_{M}^{\max}$.  
If $\mathcal{J}_{k} \cap H_{i} = \emptyset$ then we set
$\overline{u}_{i} = \underline{u}_{i} = 0$.

We now assume the following assumptions:
\begin{enumerate}
\item \label{assmp1} If \textcolor{blue}{$P(\meas \given H_j) \neq 0$} then
  $P_{j} \neq 0$.    
\item \label{assmp2} If
  \textcolor{blue}{$P(\meas \given H_2)+ P(\meas \given H_3) = 1$}, then
  $\overline{u}_{2} \geq \overline{u}_{3}$ and
  $\underline{u}_{2} \geq \underline{u}_{3}$.   
\item \label{assmp4} The acceleration commands of the interceptor and
  the target, $u_{M}$ and $u_{T}$, respectively, are constant in the
  interval $[t_{go},t_{go}-\tau]$.
\end{enumerate}
The first assumption states that if the likelihood indicates that a hypothesis is feasible, i.e., its likelihood is nonzero, then the prior probability must indicate that it is feasible too. 
The second assumption states that when all the particles are inside the singular region, then the particle that is closest to the upper boundary of the singular region is in $H_{2}$. Similarly, the particle which is the closest to the lower boundary of the singular region must be in $H_{3}$.
Intuitively, this is because $H_{2}$ represents states tending upward and $H_{3}$ states tending downward, so the boundary-critical particles must lie in these respective sets.
The last assumption states that the acceleration commands are assumed to be constant for the given horizon when using the Bayesian decision criterion, as mentioned in Sec.~\ref{subsec:Bayes4int}.

The following lemma states a necessary and sufficient condition that the expected additional risk of choosing hypothesis $H_{i}$ is zero.
\begin{lemma}
At time $t_{k}$
	\begin{enumerate}
		\item $I_{1} = 0$ if and only if for $u_{M} = a_{M}^{\max}$
		\begin{equation}
			z_{1j'}(t_{go;j'}-\tau) \geq - z^{*}(t_{go;j'} - \tau)
			\label{eq:I1}
		\end{equation}
		for all $j' \in \mathcal{J}_{k}$,
		\item $I_{2} = 0$ if and only if for any $u_{M}\in [\underline{u}_{2},\overline{u}_{2}]$
		\begin{equation}
			\abs{z_{2j'}(t_{go;j'}-\tau)} \leq z^{*}(t_{go;j'} - \tau)
			\label{eq:I2}
		\end{equation}
		for all $j' \in \mathcal{J}_{k}$,
		\item $I_{3} = 0$ if and only if for any $u_{M}\in [\underline{u}_{3},\overline{u}_{3}]$
		\begin{equation}
			\abs{z_{3j'}(t_{go;j'}-\tau)} \leq z^{*}(t_{go;j'} - \tau)
			\label{eq:I3}
		\end{equation}
		for all $j' \in \mathcal{J}_{k}$,
		\item $I_{4} = 0$ if and only if for $u_{M} = - a_{M}^{\max}$
		\begin{equation}
			z_{4j'}(t_{go;j'}-\tau) \leq z^{*}(t_{go;j'} - \tau)
			\label{eq:I4}
		\end{equation}
		for all $j' \in \mathcal{J}_{k}$.
	\end{enumerate}
	\label{lem:I}
    In Eqs.~\eqref{eq:I1}--\eqref{eq:I4}, $z_{ij'}(t_{go;j'}-\tau)$ is the ZEM of particle $j'$, which is
    propagated ahead over the interval $[t_{go;j'},t_{go;j'}-\tau]$ by
    applying the acceleration command assuming that hypothesis $H_{i}$ is true.
\end{lemma}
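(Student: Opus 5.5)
The plan is to reduce each part to the structure of $I_i$ in Eq.~\eqref{Eq:I}, which is a sum of nonnegative terms, so $I_i=0$ holds if and only if every summand vanishes.

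First I show that each summand is nonnegative. The priors $P_j$ and likelihoods $P(\meas\given H_j)$ are nonnegative. Each cost $C_{ij}$ in Eqs.~\eqref{eq:c_ij1}--\eqref{eq:c_ij2} is a weighted average of terms $\max(0,\abs{z_{i'j'}}-z^{*})\ge 0$. I also use $C_{ij}\ge C_{jj}$. The cleanest route is to argue that $C_{jj}=0$ for the relevant hypotheses, or at least that the decision-dependent propagated miss is never below the unavoidable one; this is the inequality already invoked after Eq.~\eqref{eq:I_example}. Granting it, $I_i=0$ holds iff, for every $j\neq i$ with $P(\meas\given H_j)>0$, we have $C_{ij}=C_{jj}$. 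By Assumption~\ref{assmp1}, $P_j\neq0$ whenever $P(\meas\given H_j)\neq0$, so no such term is killed by a zero prior.

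Next I translate $C_{ij}=C_{jj}$ into a pointwise statement over particles. Because the normalized weights $\tilde w^{j'}_k$ are strictly positive exactly on $\mathcal{J}_k\cap H_j$, a weighted sum of nonnegative terms vanishes iff every term vanishes. Hence $C_{ij}=0$ iff $\abs{z_{i'j'}(t_{go;j'}-\tau)}\le z^{*}(t_{go;j'}-\tau)$ for all $j'\in\mathcal{J}_k\cap H_j$ and all relevant $i'$. Taking the union over $j$ (including $j=i$, where the condition is the correct-decision one) yields the quantifier "for all $j'\in\mathcal{J}_k$".

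The cases then split as follows.
\begin{itemize}
\item For $i=1$, the command is $u_M=a_M^{\max}$ (Eq.~\eqref{Eq:DGL1} outside $\mathcal{Z}^*$, above). By Assumption~\ref{assmp4} and Eq.~\eqref{eq:z_ODE}, the maximal command drives $Z$ monotonically downward over the horizon. Violation of the upper boundary by a propagated particle then incurs the same cost as under its own correct decision, so only the lower-boundary inequality~\eqref{eq:I1} is binding. Part~4 is symmetric, giving~\eqref{eq:I4}.
\item For $i=2,3$, the commands applied by particles $i'\in H_i$ are the saturated linear ones, ranging over $[\underline u_i,\overline u_i]$. Since $z_{i'j'}(t_{go;j'}-\tau)$ is affine in the constant $u_M$ (integrate Eq.~\eqref{eq:z_ODE} under Assumption~\ref{assmp4}), the inequality $\abs{z}\le z^*$ holds for all $i'$ iff it holds for all $u_M$ in the interval. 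This uses convexity, so checking the endpoints suffices. The case where $H_i$ has no particles, with $\overline u_i=\underline u_i=0$ and Eq.~\eqref{eq:c_ij2}, is covered by the same statement.
\end{itemize}

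The main obstacle is the first step's claim $C_{ij}\ge C_{jj}$, with equality exactly when the propagated particles stay within the boundary. For regular-region particles, $C_{jj}>0$ in general, so I must argue that applying the correct bang-bang command keeps the distance to the boundary constant, since equicost lines are invariant under optimal play. A suboptimal command can only increase it, and the one-sided conditions~\eqref{eq:I1} and~\eqref{eq:I4} encode exactly when no increase occurs. I would first try to establish this by comparing the solutions of Eq.~\eqref{eq:z_ODE} under $u_M=\pm a_M^{\max}$ against the boundary evolution $\partial\mathcal{Z}^*$ from Eq.~\eqref{eq:sing_reg}.
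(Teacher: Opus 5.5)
Your proposal is correct at the paper's level of rigor and follows essentially the same route as its proof. Both show $I_i=0$ by making each summand $P_jP(\meas\given H_j)(C_{ij}-C_{jj})$ vanish, use Assumption~\ref{assmp1} to keep the priors nonzero, and use $C_{jj}=0$ for the singular-region hypotheses. The step you flag as the main obstacle is handled the way you sketch: under optimal target play, the correct bang-bang command keeps a regular-region particle on its equicost line, while any other command strictly increases its distance from the boundary. That is how the paper gets $C_{14}>C_{44}$, so the condition forces $P(\meas\given H_4)=0$.

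Your extra details only make explicit what the paper leaves implicit. These are the nonnegativity of the summands, the strict positivity of the normalized weights that yields the pointwise condition, and the affinity of the propagated ZEM in the constant $u_M$, which reduces cases~2--3 to the endpoints $\underline{u}_i,\overline{u}_i$.
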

\begin{proof}
	The proof is deferred to Appendix~\ref{proof:I}.
\end{proof}

Lemma~\ref{lem:I} provides necessary and sufficient conditions for
$I_{i} = 0$ based on future states of the particles. To develop
similar conditions based on the current time states of the particles,
we first denote
\begin{equation}
	\eta \dfn \frac{t_{go} - \tau}{\tau_{M}} = \theta - \frac{\tau}{\tau_{M}}
\end{equation}
and then denote the following four regions over the game space.
\begin{description}
	\item[$\mathcal{A}_{1}$:] all the states satisfying
	\begin{equation}
		- \ZEM(t_{go}) \leq \partial \mathcal{Z}^{*}(t_{go}) + 2 a_{M}^{\max} \tau_{M}^{2} \left[\Upsilon(\eta)-\Upsilon (\theta)\right]
		\label{eq:I1_2}
	\end{equation}
	\item[$\mathcal{A}_{2}$:] all the states satisfying
	\begin{equation}
		\ZEM(t_{go}) \leq \partial \mathcal{Z}^{*}(t_{go})  + a_{M}^{\max} \tau_{M}^{2} \left(1 - \frac{\underline{u}_{2}}{a_{M}^{\max}}\right) \left[\Upsilon(\eta)-\Upsilon (\theta)\right]
		\label{eq:I2_up}
	\end{equation}
	and
	\begin{equation}
		- \ZEM(t_{go}) \leq \partial \mathcal{Z}^{*}(t_{go}) + a_{M}^{\max} \tau_{M}^{2} \left(1 + \frac{\overline{u}_{2}}{a_{M}^{\max}}\right) \left[\Upsilon(\eta)-\Upsilon (\theta)\right]
		\label{eq:I2_dw}
	\end{equation}
	\item[$\mathcal{A}_{3}$:] all the states satisfying
	\begin{equation}
		\ZEM(t_{go}) \leq \partial \mathcal{Z}^{*}(t_{go})  + a_{M}^{\max} \tau_{M}^{2} \left(1 - \frac{\underline{u}_{3}}{a_{M}^{\max}}\right) \left[\Upsilon(\eta)-\Upsilon (\theta)\right]
		\label{eq:I3_up}
	\end{equation}
	and
	\begin{equation}
		- \ZEM(t_{go}) \leq \partial \mathcal{Z}^{*}(t_{go}) + a_{M}^{\max} \tau_{M}^{2} \left(1 + \frac{\overline{u}_{3}}{a_{M}^{\max}}\right) \left[\Upsilon(\eta)-\Upsilon (\theta)\right]
		\label{eq:I3_dw}
	\end{equation}
	\item[$\mathcal{A}_{4}$:] all the states satisfying
	\begin{equation}
		\ZEM(t_{go}) \leq \partial \mathcal{Z}^{*}(t_{go}) + 2 a_{M}^{\max} \tau_{M}^{2} \left[\Upsilon(\eta)-\Upsilon (\theta)\right]
		\label{eq:I4_2}
	\end{equation}
\end{description}

Equipped with Lemma~\ref{lem:I} and these definitions, we now state
the following lemma.
\begin{lemma}
	There exists $i\in \{1,2,3,4\}$ such that $I_{i} = 0$ at time $t_{k}$ if and only if 
	\begin{equation}
		\sum_{j' \in \mathcal{A}_{i}} w_{k}^{j'} = 1
	\end{equation}
	\label{corol}
\end{lemma}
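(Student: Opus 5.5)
The plan is to reduce the statement to Lemma~\ref{lem:I} and then translate each future-time condition in Eqs.~\eqref{eq:I1}--\eqref{eq:I4} into a current-time condition. The translation uses the linearized ZEM evolution~\eqref{eq:z_ODE} under Assumption~\ref{assmp4} (constant commands over the horizon). Fix $i$. By Lemma~\ref{lem:I}, $I_i=0$ holds if and only if the propagated inequality holds for every $j'\in\mathcal{J}_k$. Since particles outside $\mathcal{J}_k$ carry zero weight, the statement $\sum_{j'\in\mathcal{A}_i}w_k^{j'}=1$ is equivalent to $\mathcal{J}_k\subseteq\mathcal{A}_i$. It therefore suffices to show, particle by particle, that the propagated inequality for $H_i$ is equivalent to membership of the current state in $\mathcal{A}_i$. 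The existential ``there exists $i$'' then follows by taking the same $i$ on both sides.

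\textbf{Propagation.} Integrate Eq.~\eqref{eq:z_ODE} from $t_{go}$ to $t_{go}-\tau$ with $u_M$ and $u_T$ held constant. Using $\frac{d}{dt}\Upsilon(\theta)=-\Psi(\theta)/\tau_M$, this gives
\begin{equation*}
z(t_{go}-\tau)=\ZEM(t_{go})+u_M\tau_M^2\left[\Upsilon(\eta)-\Upsilon(\theta)\right]-u_T\tau_T^2\left[\Upsilon(\eta/\epsilon)-\Upsilon(\theta/\epsilon)\right].
\end{equation*}
The same integration, applied to Eq.~\eqref{eq:sing_reg}, shows that the boundary satisfies
\begin{equation*}
z^*(t_{go}-\tau)=\partial\mathcal{Z}^*(t_{go})+a_M^{\max}\tau_M^2[\Upsilon(\eta)-\Upsilon(\theta)]-a_T^{\max}\tau_T^2[\Upsilon(\eta/\epsilon)-\Upsilon(\theta/\epsilon)].
\end{equation*}

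\textbf{Substitution.} Substitute both expressions into each condition. The target contribution enters the worst case with $u_T=\pm a_T^{\max}$ of the adverse sign, so it cancels exactly against the target term of $z^*$. This cancellation is why the regions $\mathcal{A}_i$ contain no $\epsilon$ terms. The cases then read as follows.
\begin{itemize}
\item[] For $i=1$, take $u_M=a_M^{\max}$. Condition~\eqref{eq:I1} becomes $-\ZEM\le\partial\mathcal{Z}^*+2a_M^{\max}\tau_M^2[\Upsilon(\eta)-\Upsilon(\theta)]$, which is \eqref{eq:I1_2}.
\item[] For $i=4$, take $u_M=-a_M^{\max}$. Condition~\eqref{eq:I4} becomes \eqref{eq:I4_2} in the same way.
\item[] For $i=2,3$, the absolute-value condition splits into an upper and a lower inequality. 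Since $\Upsilon(\eta)-\Upsilon(\theta)\le 0$ (because $\Upsilon$ is increasing and $\eta\le\theta$), the right-hand side is monotone in $u_M$. The binding command for the upper inequality is therefore the endpoint $\underline{u}_i$, and for the lower inequality it is $\overline{u}_i$. This yields \eqref{eq:I2_up}--\eqref{eq:I3_dw}, and requiring the conditions for all $u_M\in[\underline{u}_i,\overline{u}_i]$ reduces to checking these two endpoints.
\end{itemize}

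\textbf{Main obstacle.} The delicate part is the sign bookkeeping in this substitution. Specifically, I must confirm that the worst-case target command is the one that cancels the target boundary term, rather than doubling it. I must also confirm which endpoint of $[\underline{u}_i,\overline{u}_i]$ is binding for each side, given that the bracket $\Upsilon(\eta)-\Upsilon(\theta)$ is nonpositive. I would check this first on $i=1$, where the coefficient $2a_M^{\max}$ must come out as $a_M^{\max}(1+1)$ from the interceptor term plus the boundary term. Matching that coefficient fixes the sign conventions, and the remaining cases follow by symmetry. Finally, I would note that every transformation is an equivalence, since it is an additive rearrangement with no division by sign-indefinite quantities. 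Both directions of the ``if and only if'' therefore follow simultaneously.
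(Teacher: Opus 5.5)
Your proposal is correct and follows essentially the same route as the paper's Appendix~\ref{appendB}. You integrate Eq.~\eqref{eq:z_ODE} under constant commands, evaluate the singular-region boundary at $t_{go}-\tau$, and substitute into the conditions of Lemma~\ref{lem:I} using the adverse target command and the least favorable interceptor command. Your explicit checks make precise what the paper leaves implicit, and your sign bookkeeping is right: the target terms cancel exactly, the bracket $\Upsilon(\eta)-\Upsilon(\theta)\le 0$ gives the monotonicity in $u_M$ that makes $\underline{u}_i$ and $\overline{u}_i$ the binding endpoints, and the weight condition reduces to $\mathcal{J}_k\subseteq\mathcal{A}_i$.
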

\begin{proof}
	The proof is deferred to Appendix~\ref{appendB}.
\end{proof}

Based on these results, we now prove the following theorem, which
provides a sufficient condition for the optimality of hypothesis
$H_{i}$.
\begin{theorem}
	If at time $t_{k}$ there exists $i\in \{1,2,3,4\}$ such that
	\begin{equation}
		\sum_{j' \in \mathcal{A}_{i}} w_{k}^{j'} = 1
	\end{equation}
	then hypothesis $H_{i}$ is an optimal Bayesian decision.
\end{theorem}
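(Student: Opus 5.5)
The proof is a direct combination of Lemma~\ref{corol} with the structure of the decision rule~\eqref{Eq:Decision}. Suppose that for some $i\in\{1,2,3,4\}$ we have $\sum_{j'\in\mathcal{A}_i} w_k^{j'} = 1$. The ``if'' direction of Lemma~\ref{corol} gives $I_i = 0$ at time $t_k$. Strictly, the lemma is phrased existentially, so we first check that its proof (Appendix~\ref{appendB}) actually establishes the index-matched statement: full weight in $\mathcal{A}_i$ implies $I_i=0$ for that same $i$. This is what the construction of $\mathcal{A}_i$ suggests. Each $\mathcal{A}_i$ is obtained by propagating the ZEM of Eq.~\eqref{eq:ZEM} over $[t_{go},t_{go}-\tau]$ under the constant commands of Assumption~\ref{assmp4}, with $u_M=\pm a_M^{\max}$ or $u_M\in[\underline{u}_i,\overline{u}_i]$ and worst-case $u_T=\pm a_T^{\max}$. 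The resulting future conditions~\eqref{eq:I1}--\eqref{eq:I4} are rewritten as current-time inequalities using $\partial\mathcal{Z}^*$ and the increment $\Upsilon(\eta)-\Upsilon(\theta)$. Lemma~\ref{lem:I} then converts those conditions into $I_i=0$.

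Next we show that every $I_\ell$ is nonnegative. By Eq.~\eqref{Eq:I},
\begin{equation*}
I_\ell=\sum_{j\neq\ell}P_jP(\meas\given H_j)(C_{\ell j}-C_{jj}),
\end{equation*}
where the priors $P_j$ and the likelihoods~\eqref{eq:likelihood} are nonnegative. It therefore suffices to have $C_{\ell j}\ge C_{jj}$ whenever $P(\meas\given H_j)>0$; when the likelihood vanishes, the costs are arbitrary and the term drops out. This inequality is the key step. It follows from the game-theoretic meaning of the costs. $C_{jj}$ is the weighted distance of the particles of $H_j$ to the singular region, i.e., the value of the game when the correct (DGL1-optimal) command is applied. $C_{\ell j}$ is the weighted distance after propagating with a possibly suboptimal command. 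Since the value of the zero-sum game cannot be improved by the pursuer deviating from its saddle-point strategy, the propagated distance is at least the current one.

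The paper already uses the inequality $C_{i\ell}-C_{\ell\ell}\ge0$, ``by our earlier assumption'', in the discussion of Eq.~\eqref{eq:I_example}. I would invoke it directly, and if needed justify it from the ZEM dynamics~\eqref{eq:z_ODE}. Outside $\mathcal{Z}^*$, the distance $|z|-z^*$ is nonincreasing under the optimal pursuer command against the optimal evader command, and no other pursuer command does better.

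With $I_\ell\ge0$ for all $\ell$ and $I_i=0$, the index $i$ attains $\min_\ell I_\ell$. Hence $i\in\arg\min$ in Eq.~\eqref{Eq:DecisionDecision}, and $H_i$ is an optimal Bayesian decision. It need not be unique, which is exactly the ambiguity exploited later for trajectory shaping. The main obstacle is rigorously justifying $C_{\ell j}\ge C_{jj}$ for the particle-based cost formulas~\eqref{eq:c_jj}--\eqref{eq:c_ij2}, because the propagated costs are evaluated at $t_{go}-\tau$ while $C_{jj}$ is evaluated at $t_{go}$. Failing a clean argument, I would state this as the standing assumption already used in the paper.
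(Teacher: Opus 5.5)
Your argument is the paper's: Lemma~\ref{corol}, read index by index as its appendix proves it, gives $I_i=0$, and the $\arg\min$ rule~\eqref{Eq:Decision} then makes $H_i$ optimal. You also spell out that every $I_\ell\ge 0$, which the paper's one-line proof uses without saying so (it only cites $C_{i\ell}-C_{\ell\ell}\ge 0$ ``by our earlier assumption''); your time-mismatch worry is settled by noting that $\frac{d}{dt}\partial\mathcal{Z}^*=-a_M^{\max}\tau_M\Psi(\theta)+a_T^{\max}\tau_T\Psi(\theta/\epsilon)$ equals $\frac{d}{dt}\lvert \ZEM\rvert$ under mutual optimal play in the regular regions, so $\lvert\ZEM\rvert-\partial\mathcal{Z}^*$ is invariant there and cannot decrease when the pursuer deviates against an optimally playing target (giving $C_{\ell j}\ge C_{jj}$ for $j=1,4$), while $C_{jj}=0$ for $j=2,3$.
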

\begin{proof}
	Lemma~\ref{corol} shows the equivalency between $I_{i} = 0$ and $\sum_{j' \in \mathcal{A}_{i}} w_{k}^{j'}= 1$.
	We know that if there exists $i\in\{1,2,3,4\}$ such that $I_{i} = 0$ then hypothesis $H_{i}$ is optimal based on Equation~\eqref{Eq:Decision}. 
\end{proof}

When the optimal Bayesian decision is a single, particular hypothesis,
there is no need to perform all the calculations as presented in
Algorithm~\ref{alg:estGL}, which translates to a significant
computational effort reduction. However, the optimal solution is not
guaranteed to be unique, and, in some cases, all four hypotheses are
optimal, which results in a decision ambiguity. We resolve this
ambiguity by harnessing it to improve the information acquired by the
interceptor, as presented in the sequel.
\subsection{Information-Enhancement Trajectory Shaping}
\label{subsec:TS}
The Bayesian decision criterion facilitates unique optimal decisions
under severe uncertainties when the associated hypothesis costs,
related to the finite cost calculation horizon, denoted by
$\tau_{\max}$, are non-zero. However, when the expected additional
costs vanish, for this horizon, no such unambiguous decision exists,
and the acceleration command becomes
arbitrary. In~\cite{mudrik_estimation-based_2019-1}, when facing such
situations, the authors arbitrarily used the classical version of the
DGL1 law.

In the present work we propose exploiting this command ambiguity to
improve the information provided to the estimator. To that end, we
define the set of admissible control functions as
\begin{equation}
  \mathcal{U}_{M}^{k} \dfn \left\lbrace u_{M}^{k}:|u_{M}^{k}|\leq a_{M}^{\max}, \sum_{j'\in \lbrace H_{1} , H_{4} \rbrace} w_{k+1}^{j'} \leq W_{\text{Thres}} \right\rbrace, 
\label{eq:admis_set}
\end{equation}
where each member of the set, $u_{M}^{k}$, keeps most of the particles
inside the singular region, in the sense that the sum of the weights
of all particles outside the singular region at time $t_{k+1}$ does
not exceed a predefined threshold $W_{\text{Thres}}$.
\textcolor{blue}{This constraint ensures that the interceptor's state remains within the singular region. Preventing the state from exiting this region is crucial, as doing so when the target has not maneuvered can result in poor guidance performance due to the controller's inability to recover. The performance of the algorithm depends on the selection of the predefined threshold, which serves as a tuning parameter. This selection introduces a trade-off: prioritizing stricter safety guarantees by choosing low values may result in conservative commands, whereas prioritizing performance (by choosing higher values) allows for more aggressive maneuvering but with the risk of crossing the singular region boundaries.}

As is well known, the Cram\'er-Rao lower bound (CRLB) theorem\textcolor{blue}{~\cite{tichavsky_posterior_1998}} states
that, at time $t_k$, 
\begin{equation}
  \mathbb{E} [(\hat{\textbf{x}}_{k}-\textbf{x}_{k})(\hat{\textbf{x}}_{k}-\textbf{x}_{k})^{T}] \succeq J_{k}^{-1}
\end{equation}
where $J_{k}$, the Fisher information matrix (FIM) at time $t_{k}$, is
given by
\begin{equation}
J_{k} = \mathbb{E} [ - \nabla_{\textbf{x}_{k}}^{2} \log p (\textbf{z}_{k} \mid \textbf{x}_{k})].
\end{equation}
Based on~\cite{tulsyan_particle_2013}, a particle-based approximation
of the FIM is
\begin{equation}
J_{k} \approx - \sum_{j=1}^{N_{p}} w_{k}^{j}  \nabla_{\textbf{x}_{k}}^{2} \log p (\textbf{z}_{k} \mid \textbf{x}_{k}^{j}).
\end{equation}
Using this particle-based approximation, we can predict the resulting FIM for any sequence of acceleration commands. 
We first discretize the admissible acceleration commands to create a finite number of options. 
We then apply simultaneously each of these options as a constant acceleration command to each particle until it reaches the boundary of the singular region.  
Thereafter, each particle uses the regular bang-bang DGL1 acceleration command outside the singular region. 
When the first particle has reached the end of the engagement, we calculate the approximate FIM resulting from the applied sequence of acceleration commands.


Now, to enhance the information provided to the estimator, we seek, in
the admissible set, the interceptor acceleration command that
maximizes the expected FIM at some predefined horizon. This is
equivalent to seeking the command that minimizes the uncertainty in
the game space, represented by the CRLB (the inverse of the FIM).
Following~\cite{oshman_optimization_1999}, this command is found by
minimizing, over the admissible set, the determinant of the CRLB.
Because the dominant states in this problem are the ZEM and the
time-to-go, only these states are considered in the minimization. This
is done by, first, computing the entire FIM, for the four dimensional
state vector. Next, when computing the inverse of the FIM, only the
sub-matrix entries associated with the ZEM and the time-to-go are
considered.  \textcolor{blue}{Formally, let the inverse of the FIM for
  a predefined horizon $h$ be partitioned as
  $J_{k+h}^{-1} = \left[ \begin{smallmatrix} \mathbf{\Sigma}_{11} &
      \mathbf{\Sigma}_{12} \\ \mathbf{\Sigma}_{12}^T &
      \mathbf{\Sigma}_{22} \end{smallmatrix} \right] \in \mathbb{R}^{4
    \times 4}$, where
  $\mathbf{\Sigma}_{11} \in \mathbb{R}^{2 \times 2}$ corresponds to
  the $t_{go}$ and $Z_{\text{DGL1}}$ states. We then define
  $\bar{J}_{k+h}^{-1} \triangleq \mathbf{\Sigma}_{11}$,} and the
resulting minimization problem is, therefore,
\begin{equation}
u_{M}^{k} = \arg \min_{u_{M}^{k} \in \mathcal{U}_{M}^{k}}
\det \bar{J}_{k+h}^{-1}.
\label{eq:OHTS}
\end{equation}
%
\subsection{GST-Compliant Guidance Law}
\label{subsec:gst_gl}
To summarize the results of this section, we present herein two
versions of the modified, estimation-aware DGL1. The first version,
which includes the information-enhancement feature, is
\begin{equation}
u_{M}^{k}=
\begin{cases}
+a_{M}^{\text{max}}           & H_1 \text { is decided} \\
\sum_{j'\in H_{2}}\tilde{w}_{k}^{j'}\frac{z_{j'}( t_{go;j'} )}{z^{*}( t_{go;j'})}a_{M}^{\text{max}} & H_2 \text { is decided} \\
\sum_{j'\in H_{3}}\tilde{w}_{k}^{j'}\frac{z_{j'} ( t_{go;j'} )}{z^{*}( t_{go;j'})}a_{M}^{\text{max}} & H_3 \text { is decided} \\
-a_{M}^{\text{max}}           & H_4 \text { is decided} \\
\arg \underset{u_{M}^{k} \in \mathcal{U}_{M}^{k}}{\min}
\det \bar{J}_{k+h}^{-1}  & \text {No decision}
\end{cases}.
\label{eq:ModGL}
\end{equation}
This guidance law has five modes, the first four of which correspond
to the four hypotheses of the interception problem, where the expected
costs are non-zero. The linear acceleration command follows the
chattering prevention logic of the DGL1 law, as shown in
Eq.~\eqref{Eq:DGL1}. The first and fourth modes correspond to cases
where the optimal hypotheses are outside the singular region. The
second and third modes correspond to when the chosen hypotheses are
inside the singular region.  In these cases, the linear acceleration
command is the mean of the acceleration commands of the particles
belonging to the chosen hypothesis.  The last mode addresses the case
where the expected additional risks vanish. In this case, the
information-enhancement trajectory shaping of Eq.~\eqref{eq:OHTS} is
used.  In the sequel, we use the acronym IETS, standing for
information-enhancing trajectory shaping, to denote this
estimation-aware guidance law.

The second version of this guidance law has the same first four modes
of Eq.~\eqref{eq:ModGL}, however as its fifth mode (corresponding to
the case where the expected costs vanish for the predefined horizon
$\tau_{\max}$), it simply implements the regular version of the DGL1
law~\eqref{Eq:DGL1}.  In the sequel we use the acronym EADGL1,
standing for estimation-aware DGL1, to denote this guidance law, which
does not implement trajectory shaping.
\section{Numerical Simulation Study}
\label{sec:sim}
We evaluate the performance of the EADGL1 and IETS guidance laws via a
numerical closed-loop simulation study. To this end we adopt the
interception scenario presented
in~\cite{shaferman_stochastic_2016}. The performance of the new laws
is compared with that of the unmodified DGL1 guidance law.
\subsection{Simulation Scenario}
\label{subsec:param}
A ballistic missile defense (BMD) scenario is considered, which
includes a single intercepting missile and a single highly
maneuverable target. For simplicity, we assume that the target
performs a maximal acceleration bang-bang evasion maneuver, with a
single command switch during the engagement.
 
The target is initialized in the -$Y_{I}$ direction with a flight-path
angle of $\gamma_{T}=-\pi/2$~rad and the interceptor's flight-path is
chosen such that the interceptor velocity vector points toward the
initial target location. The interceptor's and target's maneuver
capabilities are $a_{M}^{\max}=45$~g and $a_{T}^{\max}=20$~g,
respectively. The interceptor's and target’s first-order time
constants are $\tau_{M}=\tau_{T}=0.2$~s. The interceptor's and
target’s speeds are $V_{M}=V_{T}=2500$~m/s, amounting to a nominal
engagement time of $3$~s.  The measurement noise is Gaussian
distributed with $\nu \sim \mathcal{N}(0,\sigma^{2})$ where
$\sigma=0.5$~mrad, and the IR sensor's sampling rate is $f=100$~Hz.

\subsection{Design Considerations}
The IMMPF uses $1000$ weighted particles to approximate the game state posterior PDF, where each mode uses a $500$ particle population, and the filter is initialized with equal mode probabilities for both modes. 
\textcolor{blue}{As a baseline for performance comparison, we use the standard DGL1 law. This baseline utilizes the same IMMPF estimator but computes its guidance command based on a simple weighted mean of the target state estimates, ignoring the multi-hypothesis decision structure proposed in the new framework.}

The initial radar's estimate of the state vector in Eq.~\eqref{eq:RadarSV} satisfies $ \hat{x}_{R}\sim\mathcal{N}(\bar{x}_{R},P_{R}) $ where $\bar{x}_{R}$ is the true target initial state, and the associated initial covariance matrix is
\begin{equation}
	P_{R}=\text{diag} \left\{ 50^{2},\Big(\frac{1\pi}{180}\Big)^{2},
	\Big(\frac{3\pi}{180}\Big)^{2},10^{2} \right\}.
	\label{eq:P_R}
\end{equation}
To obtain the initial interceptor's estimate of the state vector in
Eq.~\eqref{eq:StateVec1} we draw weighted samples from the radar's
estimate and use the transformation in Eq.~\eqref{eq:trans} and
Eq.~\eqref{eq:RadarInit}. We note that following this initialization,
the distribution is no longer necessarily Gaussian due to the
nonlinearity of the transformation.  The initial estimate of
$\gamma_{T}$ is given directly from the radar's estimate.

Assuming that the target performs a single bang-bang maneuver, and
following the analysis of~\cite{shaferman_stochastic_2016}, we use a
time-varying TPM, 
with $\pi^{21}_{k}=10^{-3}$ and $\pi^{12}_{k}$ obeying the functional
form of the density of a generalized Gaussian distribution, i.e.,
\begin{equation}
\pi^{12}_{k} = 
\begin{cases}
  10^{-3}             & t_{k} \leq t_{s} \\
  c_{12} \frac{\beta}{2 \alpha \Gamma(1/ \beta)}
  \exp\left(-\frac{|t_{k}-\mu|}{\alpha}\right)^{\beta} & t_{k} > t_{s}
\end{cases}
\end{equation} 
where $\Gamma(\cdot)$ denotes the gamma function. \textcolor{blue}{Figure~\ref{p_12}
shows the behavior of the transition probability $\pi^{12}_{k}$ vs
time for the parameters $t_{s} = 1.9$, $c_{12}=0.16$, $\mu=2.5$,
$\beta=6$ and $\alpha=0.45$.}
\begin{figure}[tbh]
  \begin{center}
    \includegraphics[width=3.25in]{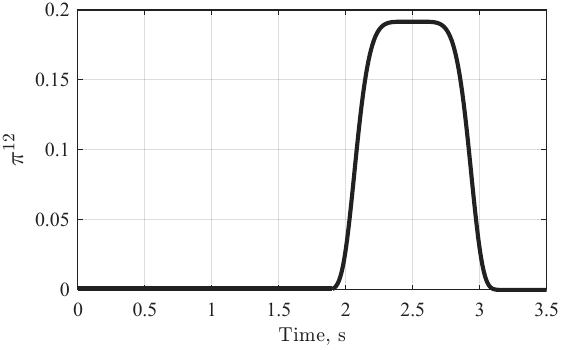}
    \caption{Time-varying transition probability from mode 1 to mode 2.}
    \label{p_12}
  \end{center}
\end{figure}


The DGL1 guidance law is used with a linear command in the singular
region and $k=0.7$. The EADGL1 law is used with a maximal horizon of
$\tau_{max} = \frac{16}{f}=0.16$~s, which is the maximal propagation
of the game's state by the Bayesian decision criterion. If a Bayesian
decision cannot be reached with this horizon, the regular version of
the DGL1 law is used.

The trajectory shaping algorithm employed by the IETS guidance law
uses $100$ particles, which are resampled from the particle population
of the IMMPF. We set the predefined horizon to be $h = 100$, which
represents a horizon length of one second, and we do not use the
trajectory shaping in the last second of the engagement. We also set
$W_{\text{Thres}} = 0.15$.
\subsection{Single-Run Analysis}
Figure~\ref{fig:Planar_Eng_Examp} presents the trajectory of the
target and three different interceptor trajectories, corresponding to
the three guidance laws compared in this study, in a single run.  The
same initial conditions, measurement noises, and target maneuvers (a
single bang-bang maneuver at 2 s) were used in all simulations.  The
trajectories of the unmodified DGL1 and the EADGL1 laws are similar,
since the modified guidance law uses the regular DGL1 whenever the
Bayesian decision criterion solution is nonunique. However, when the
IETS guidance law is used, the interceptor's course breaks away from
the trajectory generated by the unmodified DGL1 law right from the
beginning of the engagement, in order to enhance the information
driving the estimator. Of course, the interceptor changes its course
back towards the target at a later point in time, so as not to lose
the target.
\begin{figure}[tbh]
	\begin{center}
		\includegraphics[width=3.25in]{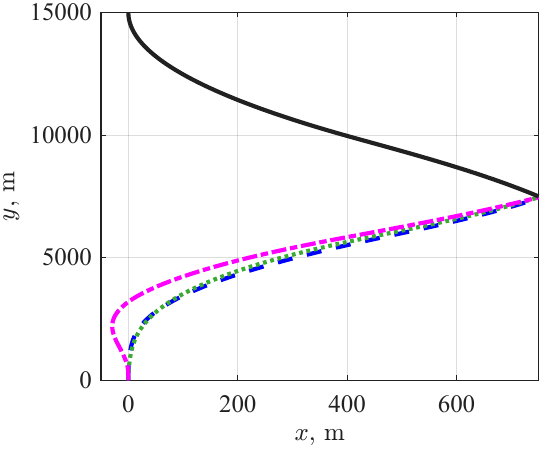}
		\caption{\textcolor{blue}{Single-run 2D trajectories: Target (solid black), unmodified DGL1 (dashed blue), EADGL1 (dotted green), and IETS (dash-dotted magenta).}}
		\label{fig:Planar_Eng_Examp}
	\end{center}
\end{figure}

Figure~\ref{fig:Game_Space_fig} presents the trajectories from
Fig.~\ref{fig:Planar_Eng_Examp} in the game
space. Figure~\ref{fig:Game_Space_Examp} shows the state's trajectory
of the game through the entire engagement. At the beginning of the
engagement, the regular DGL1 and the EADGL1 trajectories are similar,
whereas the IETS's trajectory goes towards the singular region's
boundary. After it reaches the boundary, its trajectory moves along
it. Figure~\ref{fig:Game_Space_Examp_zoom} shows that the IETS
guidance law performs best against this maneuver when the target
performs its bang-bang maneuver due to its residence along the
trajectory. Moreover, the effect of the evasion maneuver is minimal
because the state stays inside the singular region and results in a
miss distance of 1.5~m. The miss distance performance of the EADGL1 is
similar; however, the evasion maneuver does cause the state to cross
the singular region boundary but with minimal effect thanks to the
Bayesian decision criterion and results in a miss distance of
3.2~m. The regular DGL1 performance is the worst compared to the
modified versions of DGL1. The evasion maneuver's effect is
significant and results in a miss distance of 7.5~m, the largest of
all three laws.
\begin{figure}[tbh]
	\begin{center}
		\begin{subfigure}{0.49\textwidth}
			\begin{center}
				\includegraphics[width=\linewidth]{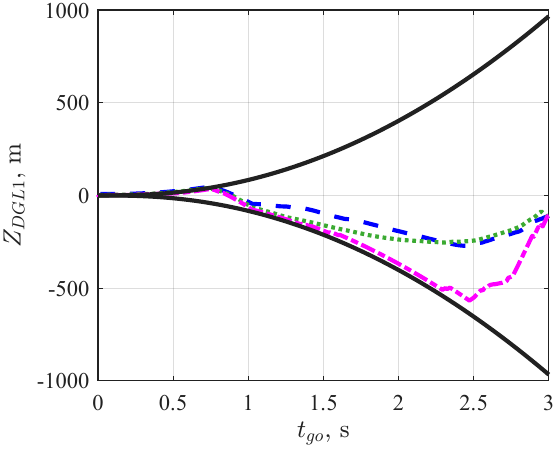}
				\caption{The entire engagement.}
				\label{fig:Game_Space_Examp}
			\end{center}
		\end{subfigure}
		~
		\begin{subfigure}{0.49\textwidth}
			\begin{center}
				\includegraphics[width=\linewidth]{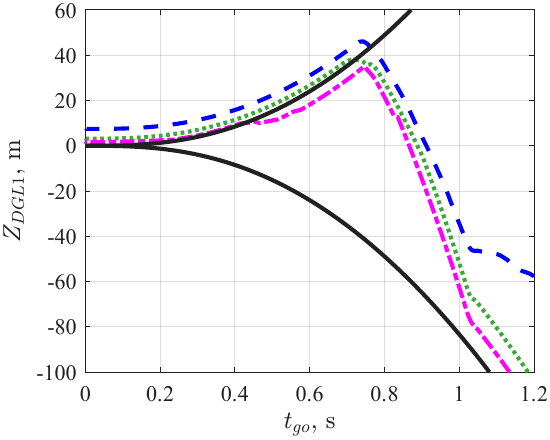}
				\caption{Zoom-in on the end of the
                                  engagement.}
				\label{fig:Game_Space_Examp_zoom}
			\end{center}
		\end{subfigure}
		\caption{\textcolor{blue}{Single-run trajectories over DGL1 game space. Comparison of the regular DGL1 (dashed blue line), EADGL1 (dotted green line), and IETS (dashed-dotted magenta line) laws.}}
		\label{fig:Game_Space_fig}
	\end{center}
\end{figure}

Figure~\ref{fig:I_per_time_Examp} compares the chosen hypothesis
throughout the engagement with the true hypothesis for the stochastic
versions of DGL1. Figure~\ref{fig:I_per_time_Examp_NO_TS} presents the
chosen hypothesis for the EADGL1; at the beginning of the engagement,
the Bayesian criterion struggles to choose the optimal hypothesis due
to severe uncertainties stemming from the initial conditions
in~Eq.~\eqref{eq:P_R}. However, in most of the next 1.6 seconds, there
exist more than one optimal Bayesian decision. Throughout the last 1.2
seconds of the engagement, the Bayesian decision criterion yields a
unique solution.  For approximately 0.2 seconds, hypothesis $H_{4}$ is
chosen, while hypothesis $H_{3}$ is true.  This decision represents
the conservative approach of this method, as it results in an
acceleration command ensuring that the state would remain inside the
singular region.  Furthermore, $H_{4}$ is the optimal Bayesian
decision because, when the target executes its bang-bang evasion
maneuver, and there is a delay in the reaction due to the severe
uncertainties, both stochastic versions of the DGL1 law outperform the
regular law, which disregards the Bayesian risk, as shown in
Fig.~\ref{fig:Game_Space_Examp_zoom}.

Figure~\ref{fig:I_per_time_Examp_TS} presents the chosen hypothesis
throughout the engagement, along with the correct hypothesis, for the
IETS guidance law. At the beginning of the engagement, the solution is
nonunique, and the trajectory shaping technique is used.  This reduces
the duration of the ambiguous phase, as the game's state reaches the
singular region boundary sooner, as shown in
Fig.~\ref{fig:Game_Space_fig}.  Thereafter, the state moves along the
boundary while mostly choosing $H_{4}$ to be the optimal decision,
thus exhibiting the aforementioned conservative approach.  Following
the target maneuver, the Bayesian decision criterion detects this
maneuver before the state exits the singular region, as it chooses
$H_{2}$ while it is still the true hypothesis.  Towards the end of the
engagement, when the boundaries of the singular region become severely
uncertain, the state exits the singular region.  However, as
Fig.~\ref{fig:Game_Space_Examp_zoom} shows, this still yields the
minimal miss distance, as compared to the other guidance laws.
\begin{figure}[tbh]
	\begin{center}
		\begin{subfigure}[t]{0.49\textwidth}
			\begin{center}
				\includegraphics[width=\linewidth]{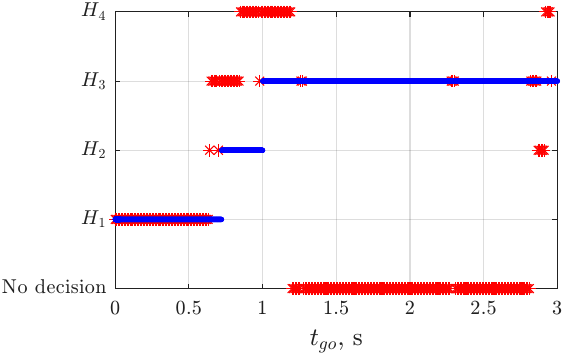}
				\caption{EADGL1 guidance law.}
				\label{fig:I_per_time_Examp_NO_TS}
			\end{center}
		\end{subfigure}
		~
		\begin{subfigure}[t]{0.49\textwidth}
			\begin{center}
				\includegraphics[width=\linewidth]{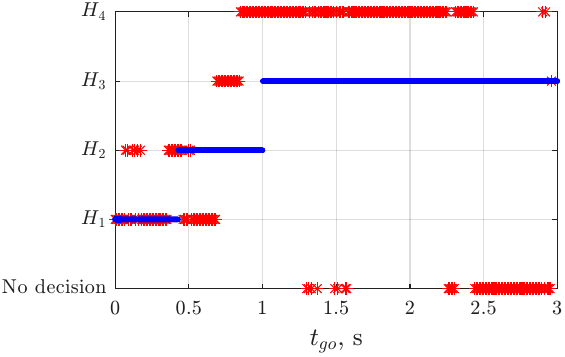}
				\caption{IETS guidance law.}
				\label{fig:I_per_time_Examp_TS}
			\end{center}
		\end{subfigure}
		\caption{The chosen hypothesis (red asterisks) and the
                  true hypothesis (blue dots) through the engagement,
                  for both stochastic versions of the DGL1 law.}
		\label{fig:I_per_time_Examp}
	\end{center}
\end{figure}

Figure~\ref{fig:Bayes_demons} shows four snapshots of the particle
cloud at different timings during the engagement. The actual Bayesian
decisions for these timings are summarized in Table~\ref{table:1}.  In
Fig.~\ref{fig:Bayes_demons_2_5} all the particles reside deeply inside
the singular region. Naturally, all the expected additional risks of
all the hypotheses are zero, rendering the decision nonunique.  In
Fig.~\ref{fig:Bayes_demons_1_2}, although most of the particles reside
inside the singular region, a significant number of them are under the
singular region boundary. Thus, even though $H_{3}$ is the true
hypothesis, the optimal Bayesian decision is $H_{4}$, because it
yields the minimal expected additional risk. This case demonstrates
the aforementioned conservative nature of the algorithm.  In
Fig.~\ref{fig:Bayes_demons_0_7} the particle cloud stretches across
all hypotheses, the estimated state resides inside the singular
region, and the true state is above the upper boundary of the singular
region.  Since the Bayesian decision criterion successfully detects
the evasion maneuver before the estimator does, it chooses $H_{1}$.
Finally, in Fig.~\ref{fig:Bayes_demons_0_07} all the particles reside
above the upper bound of the singular region, and the Bayesian
decision is, correctly, $H_{1}$.
\begin{figure}[tbh]
	\begin{center}
		\begin{subfigure}[t]{0.49\textwidth}
			\begin{center}
				\includegraphics[width=\linewidth]{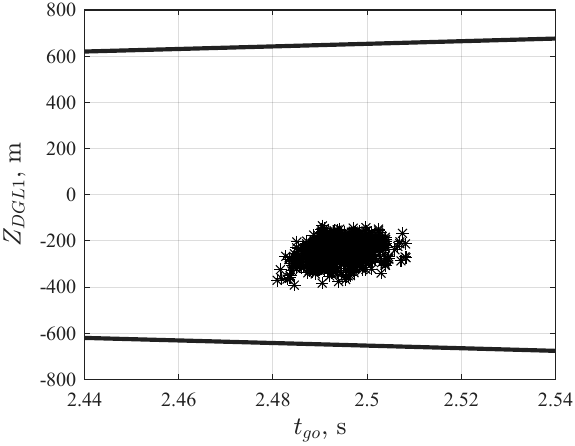}
				\caption{$t_{go} = 2.5$ s.}
				\label{fig:Bayes_demons_2_5}
			\end{center}
		\end{subfigure}
	~
		\begin{subfigure}[t]{0.49\textwidth}
		\begin{center}
			\includegraphics[width=\linewidth]{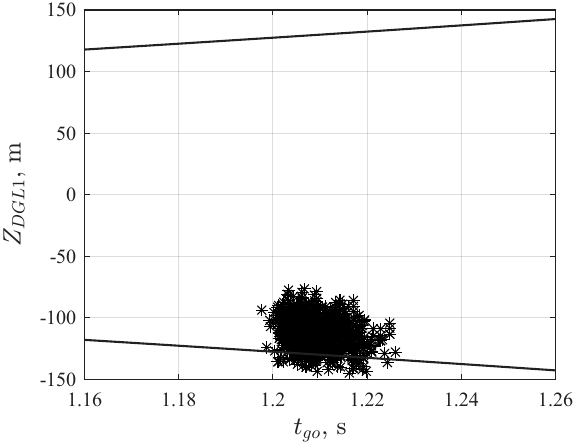}
			\caption{$t_{go} = 1.2$ s.}
			\label{fig:Bayes_demons_1_2}
		\end{center}
		\end{subfigure}
	~
	\begin{subfigure}[t]{0.49\textwidth}
		\begin{center}
			\includegraphics[width=\linewidth]{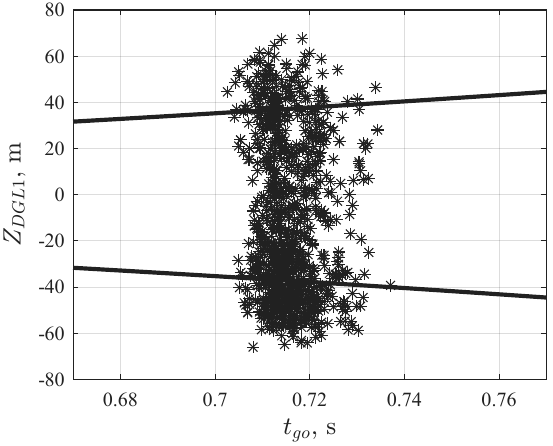}
			\caption{$t_{go} = 0.7$ s.}
			\label{fig:Bayes_demons_0_7}
		\end{center}
	\end{subfigure}
~
		\begin{subfigure}[t]{0.49\textwidth}
			\begin{center}
				\includegraphics[width=\linewidth]{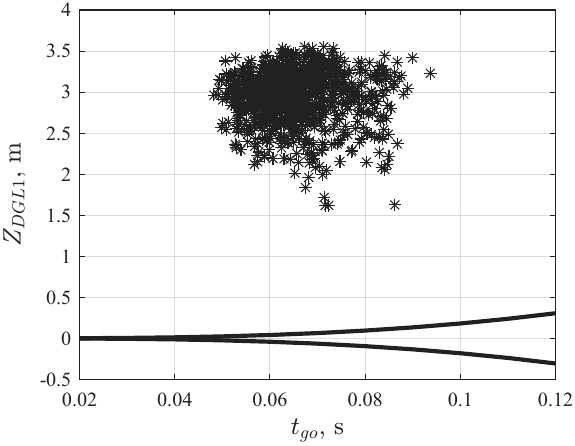}
				\caption{$t_{go} = 0.07$ s.}
				\label{fig:Bayes_demons_0_07}
			\end{center}
		\end{subfigure}
		\caption{Particle clouds over the game space,
                  throughout the engagement.}
		\label{fig:Bayes_demons}
	\end{center}
\end{figure}
\begin{table}
  \caption{Bayesian decision process for the cases in
    Fig.~\ref{fig:Bayes_demons}}
\begin{center}
\begin{tabular}{c c c c c c c}
	\hline\hline
	     $t_{go}$, s      & $I_{1}$, m & $I_{2}$, m & $I_{3}$, m & $I_{4}$, m & Chosen Hypothesis & True Hypothesis \\ \hline
	        $2.5$         & $0$        & $0$        & $0$        & $0$        & None              & $H_{3}$         \\
	        $1.2$         & $9.96$     & $6.21$     & $0.12$     & $0$        & $H_{4}$           & $H_{3}$         \\
	        $0.7$         & $0.07$     & $0.11$     & $0.24$     & $0.66$     & $H_{1}$           & $H_{1}$         \\
	       $0.07$         & $0$        & $0.02$     & $0.03$     & $0.06$     & $H_{1}$           & $H_{1}$         \\
	[1ex] 
		\hline\hline &
\end{tabular}	
	\end{center}
	\label{table:1}
\end{table}
\subsection{Monte Carlo Analysis}
We employ MC simulation study to compare the closed-loop performance
of the interceptor when using the regular (deterministic) and both
stochastic versions of the DGL1 law.  
\textcolor{blue}{To mitigate potential command chattering in regular DGL1, we employed a standard chattering reduction mechanism (see, e.g.,~\cite{shaferman_stochastic_2016}). While a complete theoretical solution to the chattering phenomenon is outside the scope of this work, our extensive MC analysis demonstrated that the guidance law remained stable and effective throughout the engagement, with no observed performance degradation due to chattering.}
In this study, the target
performs a single bang-bang evasion maneuver, timed at 0.1~s intervals
from the launch time until the end of the engagement.  In the interval
$[0, 1.5)$~s we use 200 MC runs for each switching time.  In the
interval $[1.5, 3]$~s we run 400 MC runs for each switching time.  This
division is made to gain more accurate statistics on the more
challenging time window of the target evasion maneuver, as identified
in~\cite{shaferman_stochastic_2016}.  Thus, a total of 9,400 runs are
performed for each guidance law.

Figure~\ref{fig:MD} shows the 95th percentile of the miss distance per
each target evasion maneuver's switching time, which is the warhead
lethality radius required to guarantee a kill with a probability of
0.95.  Clearly, for all evasion maneuver switching times, both
stochastic laws outperform the regular DGL1. Moreover, the IETS law
improves the performance as the effect of the evasion maneuver becomes
more substantial, i.e., when the target maneuvers after 1.5~s.  Thus,
when the target maneuvers at 2.1~s, the IETS law reduces the required
warhead lethality radius from 14.5~m, as required for the EADGL1 law,
to 10.2~m, a $30\%$ improvement.
\begin{figure}[tbh]
	\includegraphics[width=3.25in]{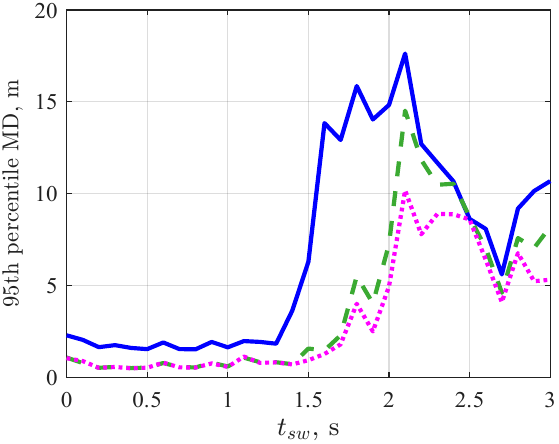}
	\centering
	\caption{\textcolor{blue}{The 95th percentile miss distance as a function of evasion switching time for Regular DGL1 (solid blue), EADGL1 (dashed green), and IETS (dotted magenta).}}	
    \label{fig:MD}
\end{figure}

Another perspective on the above findings is depicted in
Figure~\ref{fig:CDF}, that presents the miss distance cumulative
distribution function (CDF), as computed based on 9,400 MC runs.  Both
stochastic laws outperform the regular DGL1 over the entire miss
distance range, and the IETS law outperforms the EADGL1 law,
especially for kill probabilities larger than $0.40$.  To guarantee a
kill with probability 0.95, the required warhead lethality radius
would be 10.1~m for the regular DGL1, 7.09~m for the EADGL1 (a $30\%$
performance improvement), and 5.73~m for the IETS law, which is a
$43\%$ improvement compared to the regular DGL1.
\begin{figure}[tbh]
\includegraphics[width=3.25in]{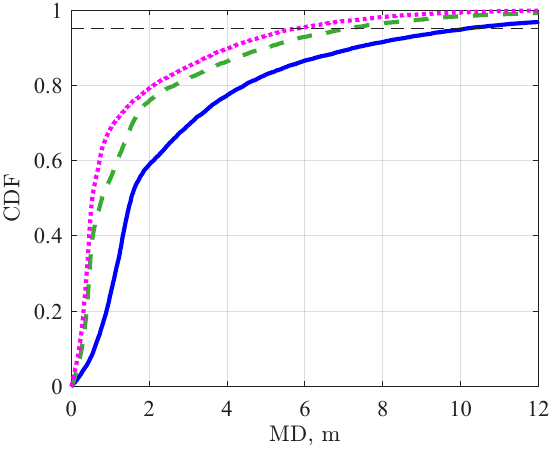}
\centering
\caption{\textcolor{blue}{Miss distance CDF in a nonlinear MC simulation. Comparison of the regular DGL1 (solid blue), EADGL1 (dashed green), and IETS (dotted magenta) laws.}}
\label{fig:CDF}
\end{figure}

Finally, a brief discussion about the computational burden associated
with the new estimation-aware algorithms is in order. Our simulations
were done on a 2020-class personal computer, powered by a 6-core Intel
CPU running a non-real-time-compliant operating system. The simulation
code was written (by a nonprofessional programmer) in MATLAB 2020
using the built-in MEX mechanism and the parallel computing
toolbox. In this simulation setting, all guidance laws achieved
real-time speed.  A major contributing factor to this operation speed
was the computational effort reduction mechanism of
Sec.~\ref{sec:cond}.  On average, it was operational in about 82\% of
the instances where the Bayesian decision criterion was invoked, and,
when applied, it worked  11.6 times faster than the full version of
the Bayesian decision criterion.
\textcolor{blue}{Specifically, the average CPU run time for the 3~s scenario was 2.0~s for the IMMPF, 0.002~s for the regular DGL1, 0.65~s for the EADGL1 (utilizing the computational reduction), and 0.81~s for the IETS. This demonstrates that the proposed framework can be implemented in real-time on standard hardware.}

\section{Conclusions}
\label{sec:con}
We have presented a novel integrated tracking and interception strategy that conforms with the generalized separation theorem. 
This strategy rests on two pillars. The first is a particle-filtering-based IMM, suitable for nonlinear, non-Gaussian, and non-Markovian hybrid systems. 
The second is a multi-hypothesis Bayesian decision algorithm that uses the particle cloud representing the posterior PDF to derive an optimal acceleration command that minimizes the Bayesian risk. 
Using structural properties of the Bayesian decision criterion, the computational cost of the method is reduced dramatically for most of the engagement, rendering the proposed strategy amenable to real-time implementation. 
Furthermore, when no unique Bayesian decision exists, a trajectory-shaping technique is implemented that enhances the information provided to the estimator. 
An extensive MC simulation study, involving a stochastic interception scenario, is used to demonstrate the performance advantage of the new, stochastic strategy, over the classical approach. 
We note that the same approach, consisting of marrying a deterministic guidance law with a Bayesian decision algorithm, can be applied to many guidance laws, other than the DGL1 law used in this work. 

\begin{appendices}
\section{Proof of Lemma~\ref{lem:I}}
\label{proof:I}
We prove the lemma for each hypothesis separately.
\begin{enumerate}
	\item Recall that the expected additional risk of choosing hypothesis $H_{1}$ is
	\begin{align}
	I_{1} &= P_{2}(C_{12} - C_{22})P(\meas \given H_2)+ P_{3}(C_{13} - C_{33})P(\meas \given H_3) \notag \\
	& + P_{4}(C_{14} - C_{44})P(\meas \given H_4).
	\end{align}
	We first prove sufficiency.  The first term vanishes if
        $P(\meas \given H_2)= 0$. Otherwise, the costs $C_{12}$ and $C_{22}$ should be
        considered.  Clearly, $C_{22} = 0$ as all the particles are
        inside the singular region.  Moreover, $C_{12} = 0$, as the
        only way to cross the boundary of the singular region when
        $u_{M} = a_{M}^{\max}$ is through its lower boundary; however,
        this contradicts the condition of the lemma.  Similarly, if
        $P(\meas \given H_3) = 0$ then the second term vanishes.  Otherwise, the costs $C_{13}$
        and $C_{33}$ should be considered.  Clearly, $C_{33} = 0$ as
        all the particles are inside the singular region.  Moreover,
        $C_{13} = 0$, as the only way to cross the boundary of the
        singular region when $u_{M} = a_{M}^{\max}$ is through its
        lower boundary; however, this contradicts the condition of the
        lemma.  Considering the last term, if there exists any
        particle of nonzero weight in $H_{4}$ that sets its missile
        maneuver to be $u_{M} = a_{M}^{\max}$ when
        $u_{T} = - a_{T}^{\max}$, then the condition of the lemma is
        clearly violated, as the target uses its optimal maneuver and
        the interceptor does not. Therefore
        $P(\meas \given H_4) = 0$ which nullifies the last term.
	
	To prove necessity, assume that there exists a particle of
        nonzero weight for which the propagated ZEM violates the
        sufficiency condition, that is
    \begin{equation}
      z_{1j'}(t_{go;j'}-\tau) < -z^*(t_{go;j'}-\tau),
      \label{eq:necessity_particle}
    \end{equation}
    By Assumption~\ref{assmp2}, such a particle must belong to either
    hypothesis~$H_{4}$ or hypothesis~$H_{3}$. If it originates
    from~$H_{4}$, then $P(\meas \given H_4) \neq 0$, and, by
    Assumption~\ref{assmp1}, $P_{4}\neq 0$. In this situation, when
    the interceptor applies $u_{M}=a_{M}^{\max}$ while the target
    applies its optimal maneuver $u_{T}=-a_{T}^{\max}$, the distance
    from the singular region increases. Hence, the wrong decision
    incurs strictly larger cost, i.e., $C_{14}>C_{44}$.
    
    If, on the other hand, the particle originates from~$H_{3}$, then
    $P(\meas \given H_3) \neq 0$, and, by
    Assumption~\ref{assmp1}, $P_{3}\neq 0$. Under the same control
    actions: $u_{M}=a_{M}^{\max}$, $u_{T}=-a_{T}^{\max}$, the
    particle’s state crosses the lower boundary of the singular
    region, so that $C_{13}>C_{33}=0$. In both cases, at least one
    term in the expected additional risk becomes strictly positive,
    which yields $I_{1}\neq 0$. 
      \item Recall that the expected additional risk of choosing
        hypothesis $H_{2}$ is
	\begin{align}
          I_{2} &= P_{1}(C_{21} - C_{11})P(\meas \given H_1) + P_{3}(C_{23} - C_{33})P(\meas \given H_3) \notag \\
                & + P_{4}(C_{24} - C_{44})P(\meas \given H_4).
	\end{align}
	
	We first prove sufficiency.  If there exists a particle of
        nonzero weight in $H_{1}$ that sets its missile maneuver to be
        $u_{M} < a_{M}^{\max}$ when $u_{T} = a_{T}^{\max}$, then this
        violates the condition of the lemma, as the target uses its
        optimal maneuver and the interceptor does not. Therefore
        $P(\meas \given H_1) = 0$, nullifying the first term.  Similarly, if there exists any
        particle of nonzero weight in $H_{4}$ that sets its missile
        maneuver to be $u_{M} > - a_{M}^{\max}$ when
        $u_{T} = - a_{T}^{\max}$, then this violates the condition of
        the lemma, as the target uses its optimal maneuver and the
        interceptor does not. Therefore
        $P(\meas \given H_4) = 0$,
        which nullifies the last term.  For the second term, if
        $P(\meas \given H_3) = 0$,
        then the second term is nullified. Otherwise, we must consider
        the costs, $C_{23}$ and $C_{33}$.  Clearly, $C_{33} = 0$ as
        all the particles are inside the singular region.  Moreover,
        $C_{23} = 0$ as the option of crossing the boundary of the
        singular region contradicts the condition of the lemma.
	
	To prove necessity, we assume that there exists a particle of
        nonzero weight such that
	\begin{equation}
          \abs{z_{2j'}(t_{go;j'}-\tau)} > z^{*}(t_{go;j'} - \tau).
	\end{equation}
        This particle can originate from either $H_{1}$, $H_{3}$, or
        $H_{4}$.  In the first case,
        $P(\meas \given H_1) \neq 0$
        and by assumption~\ref{assmp1} $P_{1} \neq 0$. When this
        particle uses $u_{M} < a_{M}^{\max}$, while the target uses
        $u_{T} = a_{T}^{\max}$, the distance from the singular region
        increases, so $C_{21} > C_{11}$.  In the second case,
        $P(\meas \given H_3) \neq 0$ and by
        assumption~\ref{assmp1} $P_{3} \neq 0$. If the state of this
        particle crosses the boundary of the singular region, then
        $C_{23} > C_{33} = 0$.  In the last case,
        $P(\meas \given H_4)\neq 0$
        and by assumption~\ref{assmp1} $P_{4} \neq 0$. When this
        particle uses $u_{M} > - a_{M}^{\max}$ while the target uses
        $u_{T} = - a_{T}^{\max}$, the distance from the singular
        region increases, so that $C_{24} > C_{44}$.
	
      \item The proof of this case closely follows the proof of the
        previous case, with a few minor adjustments.
	
      \item The proof of this case closely follows the proof of the
        first case, with a few minor adjustments.
\end{enumerate}
\section{Proof of Lemma~\ref{corol}}
\label{appendB}
  Integrating the dynamic equation of the ZEM in Eq.~\eqref{eq:z_ODE}
for constant acceleration commands yields
\begin{equation}
  \ZEM(t_{go}-\tau) = \ZEM(t_{go}) - u_{M}\tau_{M}^{2}\big[\Upsilon(\theta)-\Upsilon(\eta)\big]
  + u_{T}\tau_{T}^{2}\big[\Upsilon(\theta/\epsilon)-\Upsilon(\eta/\epsilon)\big],
  \label{eq:Z_next}
\end{equation}
and the corresponding singular–region boundary satisfies
\begin{equation}
  \partial \mathcal{Z}^{*}(t_{go}-\tau) = a_{M}^{\max}\tau_{M}^{2}\Upsilon(\eta)
  - a_{T}^{\max}\tau_{T}^{2}\Upsilon(\eta/\epsilon),
  \label{eq:Z_star_next}
\end{equation}
where $\eta=\theta-\tau/\tau_{M}$.  We verify each case of
Lemma~\ref{corol} by substituting these relations into the conditions
of Lemma~\ref{lem:I}.
\begin{enumerate}
\item For $I_{1}=0$, we substitute
  Eqs.~\eqref{eq:Z_next}–\eqref{eq:Z_star_next} into Eq.~\eqref{eq:I1}
  with $u_{M}=a_{M}^{\max}$ and $u_{T}=-a_{T}^{\max}$. This shows that
  the propagated ZEM remains inside the updated singular region
  exactly when the inequality of Eq.~\eqref{eq:I1} holds, thereby
  establishing the condition.
\item For $I_{2}=0$, two subcases arise depending on the sign of the
  propagated ZEM.  If $z_{2j'}(t_{go;j'}-\tau)>0$, the worst–case
  scenario is when the target pushes the state further upward
  ($u_{T}=a_{T}^{\max}$) and the interceptor applies the least
  favorable acceleration $u_{M}=\underline{u}_{2}$. Substituting into
  Eqs.~\eqref{eq:Z_next}–\eqref{eq:Z_star_next} yields the inequality
  of Eq.~\eqref{eq:I2_up}. Conversely, if $z_{2j'}(t_{go;j'}-\tau)<0$,
  the critical case is when the target accelerates downward
  ($u_{T}=-a_{T}^{\max}$) and the interceptor uses
  $u_{M}=\overline{u}_{2}$, yielding the inequality of
  Eq.~\eqref{eq:I2_dw}. Thus both directions are covered, and the
  condition for $I_{2}=0$ follows.
\item The case $I_{3}=0$ is analogous to the previous one but applies
  to particles associated with hypothesis $H_{3}$. The same
  reasoning—splitting according to the sign of the propagated ZEM and
  considering the corresponding worst–case controls—leads to the
  conditions in Eqs.~\eqref{eq:I3_up}–\eqref{eq:I3_dw}.
\item Finally, for $I_{4}=0$, we substitute
  Eqs.~\eqref{eq:Z_next}–\eqref{eq:Z_star_next} into Eq.~\eqref{eq:I4}
  with $u_{M}=-a_{M}^{\max}$ (and the target taking its adversarial
  control). This yields precisely the inequality required in
  Eq.~\eqref{eq:I4}, completing the proof.
\end{enumerate}
In all four cases, the integrated ZEM dynamics together with the
updated singular–region boundaries confirm that the conditions of
Lemma~\ref{corol} are satisfied under the respective worst–case
control selections.

\end{appendices}

\bibliography{Bib1}

\end{document}